\newtheorem{theorem}{Theorem}
\newtheorem{lemma}{Lemma}
\newtheorem{remark}{Remark}
\newtheorem{example}{Example}
\def\ga{\gamma}
\def\de{\delta}
\def\ep{\varepsilon}
\def\ka{\kappa}
\def\la{\lambda}
\def\om{\omega}
\def\de{\delta}
\def\C{\mathbb{C}}
\def\I{\mathbb{I}}
\def\N{\mathbb{N}}
\def\Q{\mathbb{Q}}
\def\R{\mathbb{R}}
\def\Z{\mathbb{Z}}
\def\FF{{\mathcal F}}
\def\GG{{\mathcal G}}
\def\KK{{\mathcal K}}
\def\NN{{\mathcal N}}
\def\RR{{\mathcal R}}
\def\({\left(}
\def\){\right)}
\def\du{{\mathrm D}}
\def\wt{\widetilde}
\def\eu{ {\, \textrm{\rm e} }}
\def\ol{\overline}
\newcommand{\sgn}{\operatorname{sgn}}
\begin{document}
\markboth{Agaoglou, Fe\v{c}kan, Posp\' i\v sil,   Rothos \& H. Susanto}{TRAVELLING WAVES IN NONLINEAR MAGNETO-INDUCTIVE LATTICES}

\title{TRAVELLING WAVES IN NONLINEAR MAGNETO-INDUCTIVE LATTICES}

\author{M.\ Agaoglou}
\address{Lab of Nonlinear Mathematics and Department of Mechanical Engineering, Faculty of Engineering,\\ Aristotle University of Thessaloniki, Thessaloniki 54124, Greece}

\author{M. Fe\v{c}kan}
\address{Department of Mathematical Analysis and Numerical Mathematics, Comenius University in Bratislava,\\ Mlynsk\'a dolina, 842 48 Bratislava, Slovakia}
\address{Mathematical Institute of Slovak Academy of Sciences, \v{S}tef\'anikova 49, 814 73 Bratislava, Slovakia}

\author{M. Posp\' i\v sil}
\address{Mathematical Institute of Slovak Academy of Sciences, \v{S}tef\'anikova 49, 814 73 Bratislava, Slovakia}

\author{V.M.\ Rothos}
\address{Lab of Nonlinear Mathematics and Department of Mechanical Engineering, Faculty of Engineering,\\ Aristotle University of Thessaloniki, Thessaloniki 54124, Greece}

\author{H.\ Susanto}
\address{Department of Mathematical Sciences, University of Essex, Wivenhoe Park, Colchester CO4 3SQ,\\ United Kingdom}

\begin{abstract}
We consider a lattice equation modelling one-dimensional metamaterials formed by a discrete array of nonlinear resonators. We focus on periodic travelling waves due to the presence of a periodic force.
The existence and uniqueness results of periodic travelling waves of the system are presented. Our analytical results are found to be in good agreement with direct numerical computations.
\end{abstract}

\maketitle
\noindent\textbf{Keywords:} travelling wave, lattice wave, forced magneto-inductive lattice

\section{Introduction}\label{intro}

In this work, we consider \cite{V}
\begin{equation}\label{eq1}
	\begin{gathered}
		\frac{d^2}{dt^2}(u_n-\la u_{n-1}-\la u_{n+1})+\ga\frac{d}{dt}u_n+u_n
		+\frac{d^2}{dt^2}(u_n^2-\la u_{n-1}^2-\la u_{n+1}^2)+\ga \frac{d}{dt}u_n^2
			-h(\om t+pn)=0
	\end{gathered}
\end{equation}
where $\ga\geq 0$, $\la,\om>0$, $p\neq 0$ are parameters and $h\in C(\R,\R)$ is such that
$$h(z)=\sum_{k\in\Z}h_k \eu^{k\imath z},\qquad \sum_{k\in\Z}|h_k|<\infty.$$ The equation models the dynamics of electromagnetic waves in the so-called magneto-inductive metamaterials.

Metamaterials are artificial materials that are engineered to have properties that may not be found in nature \cite{cui09}. The (structural rather than chemical) engineering is achieved by composing periodic inhomogeneities to create desirable effective behaviour. The invention ignited a new paradigm in electromagnetism, including cloaking devices \cite{schu06} (see also \cite{litc10} for a recent comprehensive review of electromagnetic manipulation enabled by metamaterials). Magnetic metamaterials are non-magnetic materials exhibiting magnetic properties in the Terahertz and optical frequencies. They were predicted theoretically in \cite{ishikawa,zhou} and demonstrated experimentally in, e.g., \cite{enkr05,grigorenko,katsarakis,yen-moser}. When the materials composed of magnetic metamaterials, e.g., split-ring resonators (SRRs), are magnetically weakly coupled through their mutual inductances, one obtains magneto-inductive metamaterials \cite{frei04,sham02,sydo05,syms05}. Magneto-inductive metamaterials consisting of periodic arrays of  SRRs have been made in one-, two- and three dimensions \cite{marq08}. Both the dimensions of SRRs and their inter-space distance are small relative to the free space wavelength and, thus, the dynamics of electromagnetic fields in such settings are governed by the quasi magnetostatic approximation \cite{jack99}.

Model equations for the propagation of nonlinear electromagnetic waves in metamaterials can be grouped into two classes. The first approach assumes that effective metamaterials are homogeneous media with specific physical properties resulting in partial differential equations, such as coupled short-pulse equation \cite{tsit10} and higher-order nonlinear Schr\"odinger equations \cite{tsit09}. In the second class, metamaterials are modelled by arrays of coupled oscillators, i.e.\ lattice equations, such as a nonlinear Klein-Gordon equation \cite{long05} and coupled Klein-Gordon equations \cite{KLT,LET}. The governing equation \eqref{eq1} falls into the second approach with $\ga$ representing the loss coefficient, $\la$ is the coupling parameter, $h$ is an external forcing that is periodic in time and varying in the spatial direction.

In the lattice equation \eqref{eq1} the nonlinearity appears in the coupling terms and in the damping. This is due to the assumption of the nonlinearity of the capacitance of the slit-ring resonators that compose the magneto-inductive materials \cite{V}. Note that the nonlinearity is different from our previous work \cite{agao14,dibl14}, where it is in the onsite potential. The present work is to study the effect of such nonlinearity. Here, we investigate the existence of travelling periodic solutions of system \eqref{eq1} due to the periodic forcing, and the bifurcation of such solutions with small $\ga$, $\la$ and $h$.  We also study the modulational stability of the periodic solutions by computing Floquet multipliers of the linearised systems.

Note that in our governing equation \eqref{eq1} the nonlinearity in the coupling terms between the sites is akin to that in the Fermi-Pasta-Ulam lattices \cite{galla07}. However, there is a significant difference in fact that the coupling in our governing equation is also the derivative term. The bifurcation structures of periodic solutions in general FPU lattices forced by periodic drive were studied in \cite{feck13}. It is imperative to study the effect of the present nonlinear couplings to periodic solutions caused by the same periodic drive. Using Lyapunov-Schmidt reduction, we derive the asymptotic expressions of the bifurcating solutions.

The present paper is organized as follows. In Section \ref{per} we are looking for a periodic travelling wave when its amplitude is limited by a function of the magnitude of the forcing. In the section, we prove the existence of such waves rigorously. The resonance condition for the parameter values is also derived. In Section \ref{res}, we study the resonance region. Using the Lyapunov-Schmidt reduction, we show that the periodic solutions persist. Several explicit examples of the general results obtained in the previous sections are presented in Section \ref{examples}. In Section \ref{nums} we solve the governing equations numerically. Comparisons with the analytical results are presented where we obtain good agreement.

\section{Existence of small periodic solutions}\label{per}

%In this section, we study the existence of small periodic solutions of the following equation of nonlinear magneto-inductive lattice \cite{V} forced by a travelling wave field
Putting $u_n(t)=U(\om t+pn)$, $z=\om t+pn$ in \eqref{eq1}, we obtain
\begin{equation}\label{eq2}
    \begin{gathered}
		\om^2(U(z)-\la U(z-p)-\la U(z+p))''+\ga\om U'(z)+U(z)\\
		{}+\om^2(U^2(z)-\la U^2(z-p)-\la U^2(z+p))''+\ga\om(U^2(z))'-h(z)=0.
    \end{gathered}
\end{equation}
We take Banach spaces
$$
\begin{gathered}
X:=\left\{U\in C(\R,\R)\mid U(z)=\sum_{k\in \Z}c_k\eu^{k\imath z},\,
	\sum_{k\in \Z}|c_k|<\infty\right\},\\
Y:=\left\{U\in C^1(\R,\R)\mid U(z)=\sum_{k\in \Z}c_k\eu^{k\imath z},\,
	\sum_{k\in \Z\backslash\{0\}}|k||c_k|<\infty\right\},\\
Z:=\left\{U\in C^2(\R,\R)\mid U(z)=\sum_{k\in \Z}c_k\eu^{k\imath z},\,
	\sum_{k\in \Z\backslash\{0\}}k^2|c_k|<\infty\right\}
\end{gathered}
$$
with the norms
$$
\|U\|:=\sum_{k\in \Z}|c_k|,\quad
\|U\|_1:=|c_0|+\sum_{k\in \Z\backslash\{0\}}|k||c_k|,\quad
	\|U\|_2:=|c_0|+\sum_{k\in \Z\backslash\{0\}}k^2|c_k|,
$$
respectively. It is easy to verify that $Z\circlearrowleft Y\circlearrowleft X$ are compact embeddings and $\|U\|\le \|U\|_1\,\forall U\in Y$, $\|U\|_1\le \|U\|_2\,\forall U\in Z$.

The following lemma is clear.

\begin{lemma}\label{Lem1}
If $U_1,U_2\in X$ then $U_1U_2\in X$ and $\|U_1U_2\|\leq \|U_1\|\|U_2\|$.
\end{lemma}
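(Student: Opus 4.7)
The plan is to recognize this as the classical Wiener-algebra statement: $X$ is the space of continuous functions with absolutely convergent Fourier series, and the norm $\|\cdot\|$ makes it a Banach algebra under pointwise multiplication. The proof is a standard convolution estimate together with Fubini/Tonelli, and the only thing to verify is that the formal product series indeed represents the pointwise product $U_1U_2$.

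First I would write $U_1(z)=\sum_{j\in\Z}a_j\eu^{j\imath z}$ and $U_2(z)=\sum_{k\in\Z}b_k\eu^{k\imath z}$, with $\sum|a_j|<\infty$ and $\sum|b_k|<\infty$. Formally multiplying the two series and collecting terms with the same exponential $\eu^{n\imath z}$ gives the convolution coefficients
\begin{equation*}
c_n=\sum_{j\in\Z}a_j b_{n-j}.
\end{equation*}
By the nonnegative version of Fubini (Tonelli) applied to counting measure on $\Z\times\Z$,
\begin{equation*}
\sum_{n\in\Z}|c_n|\le \sum_{n\in\Z}\sum_{j\in\Z}|a_j||b_{n-j}|
=\sum_{j\in\Z}|a_j|\sum_{n\in\Z}|b_{n-j}|
=\Bigl(\sum_{j\in\Z}|a_j|\Bigr)\Bigl(\sum_{m\in\Z}|b_m|\Bigr)=\|U_1\|\,\|U_2\|<\infty.
\end{equation*}
In particular each $c_n$ is well defined and the candidate series $\sum_n c_n\eu^{n\imath z}$ is absolutely (hence uniformly) convergent on $\R$, so its sum defines an element of $X$ with norm at most $\|U_1\|\|U_2\|$.

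It remains to identify this sum with the pointwise product $U_1(z)U_2(z)$. Because $\sum|a_j|<\infty$ and $\sum|b_k|<\infty$, the double series $\sum_{j,k}a_jb_k\eu^{(j+k)\imath z}$ is absolutely convergent, so one may rearrange it. Summing first over the diagonals $j+k=n$ gives $\sum_n c_n\eu^{n\imath z}$, while summing first in $j$ and $k$ separately gives $U_1(z)U_2(z)$; hence the two expressions coincide for every $z\in\R$. This yields $U_1U_2\in X$ with the desired estimate
\begin{equation*}
\|U_1U_2\|=\sum_{n\in\Z}|c_n|\le \|U_1\|\,\|U_2\|.
\end{equation*}

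There is no real obstacle here; the only subtle point is the interchange of summations, which is justified by absolute convergence. Everything else is bookkeeping.
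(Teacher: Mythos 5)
Your proof is correct: it is the standard Wiener-algebra convolution argument, with the norm bound coming from Tonelli and the identification of the product series justified by absolute convergence. The paper offers no proof at all (it simply declares the lemma ``clear''), and your argument is precisely the standard one it implicitly relies on.
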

By setting
$$
\begin{gathered}
\KK U:=\om^2\(U''(z)-\la U''(z-p)-\la U''(z+p)\)+\ga\om U'(z)+U(z),\\
\FF(U,h):=-\om^2\((U^2(z))''-\la(U^2(z-p))''-\la(U^2(z+p))''\)
	-\ga\om(U^2(z))'+h(z),\end{gathered}
$$
equation \eqref{eq2} has the form
$$
\KK U=\FF(U,h)\, .
$$

We have the next result.

\begin{lemma}\label{Lem2}
    Function %$\FF:B(\rho)\times X\to X$ fulfils
    $\FF:Z\times X\to X$ fulfils
    \begin{equation}\label{eq3}
        \|\FF(U,h)\|\leq 2\om(2\om+4\la\om+\ga)\|U\|_2^2+\|h\|,
    \end{equation}
    \begin{equation}\label{eq4}
        \|\FF(U_1,h)-\FF(U_2,h)\|
        \leq 2\om(2\om+4\la\om+\ga)\|U_1-U_2\|_2(\|U_1\|_2+\|U_2\|_2),
    \end{equation}
    \begin{equation}\label{eq5}
        \|\FF(U,h_1)-\FF(U,h_2)\|\leq \|h_1-h_2\|
    \end{equation}
for any $U,U_1,U_2\in Z$ and $h,h_1,h_2\in X$.
\end{lemma}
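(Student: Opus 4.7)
The plan is to reduce everything to the Banach algebra estimate of Lemma~\ref{Lem1} after recording how differentiation acts on the norms. First I would note that for any $U\in Z$ with Fourier coefficients $c_k$, the derivatives have Fourier coefficients $k\imath c_k$ and $-k^2 c_k$, whence
$$\|U'\|=\sum_{k\ne 0}|k||c_k|\le \|U\|_1\le \|U\|_2,\qquad \|U''\|=\sum_{k\ne 0}k^2|c_k|\le \|U\|_2,$$
and $\|U\|\le\|U\|_2$. I would also record that the shifts $U(\cdot\pm p)$ only multiply the Fourier coefficients by phases, so they preserve all three norms.

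The main step is then to bound the nonlinear pieces of $\FF$. Using Leibniz and Lemma~\ref{Lem1},
$$\|(U^2)'\|=\|2UU'\|\le 2\|U\|\|U'\|\le 2\|U\|_2^2,$$
$$\|(U^2)''\|=\|2(U')^2+2UU''\|\le 2\|U'\|^2+2\|U\|\|U''\|\le 4\|U\|_2^2.$$
Combining these with the translation-invariance of $\|\cdot\|$ to handle the $U^2(z\pm p)$ terms gives
$$\|\FF(U,h)\|\le \om^2(1+2\la)\cdot 4\|U\|_2^2+\ga\om\cdot 2\|U\|_2^2+\|h\|,$$
which collapses to \eqref{eq3}; in particular $\FF(U,h)\in X$.

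For \eqref{eq4} I would use the identity $U_1^2-U_2^2=(U_1-U_2)(U_1+U_2)$ and differentiate:
$$(U_1^2-U_2^2)''=(U_1-U_2)''(U_1+U_2)+2(U_1-U_2)'(U_1+U_2)'+(U_1-U_2)(U_1+U_2)'',$$
and similarly for the first derivative. Applying Lemma~\ref{Lem1} term-by-term and using $\|U_1+U_2\|_2\le\|U_1\|_2+\|U_2\|_2$ gives
$$\|(U_1^2-U_2^2)''\|\le 4\|U_1-U_2\|_2(\|U_1\|_2+\|U_2\|_2),\quad \|(U_1^2-U_2^2)'\|\le 2\|U_1-U_2\|_2(\|U_1\|_2+\|U_2\|_2),$$
and assembling as before produces \eqref{eq4}. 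Finally, \eqref{eq5} is immediate because $\FF(U,h_1)-\FF(U,h_2)=h_1-h_2$.

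There is no real obstacle here; the only point requiring a little care is tracking the factor $4$ coming from $(U^2)''=2(U')^2+2UU''$ versus the factor $2$ in $(U^2)'=2UU'$, together with the factor $1+2\la$ from combining the on-site and two shifted copies, so that the prefactors correctly sum to $2\om(2\om+4\la\om+\ga)$.
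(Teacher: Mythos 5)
Your proposal is correct and follows essentially the same route as the paper's proof: the translation invariance of $\|\cdot\|$, the Leibniz expansions of $(U^2)'$ and $(U^2)''$ combined with the submultiplicativity from Lemma~\ref{Lem1}, and the factorization $U_1^2-U_2^2=(U_1-U_2)(U_1+U_2)$ for the Lipschitz estimate, with the constants assembling identically. The only (cosmetic) difference is that you correctly write $\|U'\|\le\|U\|_1$ and $\|U''\|\le\|U\|_2$ where the paper asserts equalities, which hold only up to the $|c_0|$ term.
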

\begin{proof}
Property \eqref{eq5} is obvious. Next, we use $\|U'\|=\|U\|_1\leq \|U\|_2$, $\|U''\|=\|U\|_2$ with Lemma \ref{Lem1} to derive
$$\|(U^2(z+c))''\|=2\|(U'(z+c))^2+U(z+c)U''(z+c)\|\leq 4\|U(z+c)\|_2^2$$
for $c=-p,0,p$. Moreover,
$$\|U(z\pm p)\|=\sum_{k\in\Z} |c_k||\eu^{k\imath p}|=\sum_{k\in\Z} |c_k|=\|U\|$$
for $U(z)=\sum_{k\in\Z}c_k\eu^{k\imath z}$. Similarly we derive
$$
\|(U^2(z))'\|=2\|U(z)U'(z)\|\le 2\|U\|_2^2.
$$
Hence, property \eqref{eq3} is obtained.
Analogically, we get
$$
\begin{gathered}
    \|(U_1^2(z+c)-U_2^2(z+c))''\|
    \leq \|(U_1(z+c)-U_2(z+c))''(U_1(z+c)+U_2(z+c))\|\\
    {}+2\|(U_1(z+c)-U_2(z+c))'(U_1(z+c)+U_2(z+c))'\|
    +\|(U_1(z+c)-U_2(z+c))(U_1(z+c)+U_2(z+c))''\|\\
    \leq 4\|U_1-U_2\|_2\|U_1+U_2\|_2
\end{gathered}
$$
for $c=-p,0,p$, and
$$
\begin{gathered}
    \|(U_1^2(z)-U_2^2(z))'\|\leq \|(U_1(z)-U_2(z))'(U_1(z)+U_2(z))\|\\
    {}+\|(U_1(z)-U_2(z))(U_1(z)+U_2(z))'\|\leq 2\|U_1-U_2\|_2\|U_1+U_2\|_2
\end{gathered}
$$
since $a^2-b^2=(a-b)(a+b)$ for any $a,b\in\C$. Now, \eqref{eq4} follows easily.
\end{proof}
Next, if $U\in Z$ with $U(z)=\sum_{k\in \Z}c_k\eu^{k\imath z}$ then
\begin{equation}\label{lin}
    \KK U(z)=\sum_{k\in \Z}\(1-\om^2k^2+2\la\om^2k^2\cos kp+\imath\ga\om k\) c_k\eu^{k\imath z}
\end{equation}
and so $\KK\in L(Z,X)$ with
$$
    \|\KK\|_{L(Z,X)}\leq 1+\om^2(1+2\la)+\ga\om.
$$
If
\begin{equation}\label{eq7}
    \Theta:=\inf_{k\in\Z\backslash\{0\}}
    \left\{1,\sqrt{\(\frac{1}{k^2}+\om^2(2\la\cos kp-1)\)^2+\frac{\ga^2\om^2}{k^2}}\right\}>0
\end{equation}
is a constant depending on $\ga$, $\la$, $\om$ and $p$, then we also have $\KK^{-1}\in L(X,Z)\subset L(X,Y)\subset L(X)$. So $\KK^{-1} : X\to Z$ is continuous such that
\begin{equation}\label{eq8}
    \|\KK^{-1}\|_{L(X,Z)}\leq \frac{1}{\Theta}.
\end{equation}

Now we can prove the following existence result on \eqref{eq2} when all parameters except $h$ are fixed.

\begin{theorem}\label{Th1}
Assume \eqref{eq7} along with
\begin{equation}\label{eq9}
    0<\|h\|<\frac{\Theta^2}{8\om(2\om+4\la\om+\ga)}.
\end{equation}
Then equation \eqref{eq2} has a unique solution $U(h)\in {B(\rho_h)}$ in a closed ball
$$B(\rho_h):=\{U\in Z\mid \|U\|_2\leq \rho_h\},$$
where
\begin{equation}\label{eq9.5}
    \rho_h=\frac{\Theta-\sqrt{\Theta^2-8\om(2\om+4\la\om+\ga)\|h\|}}{4\om(2\om+4\la\om+\ga)}.
\end{equation}
Moreover, $U(h)$ can be approximated by an iteration process. Finally, it holds
\begin{equation}\label{eq10}
    \|U(h_1)-U(h_2)\|_2\leq
        \frac{\|h_1-h_2\|}{\sqrt{\Theta^2-8\om(2\om+4\la\om+\ga)\max\{\|h_1\|,\|h_2\|\}}}
\end{equation}
for any $h_1,h_2\in X$ satisfying \eqref{eq9}.
\end{theorem}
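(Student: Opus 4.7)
The plan is to recast \eqref{eq2} as a fixed point equation $U=\TT(U,h):=\KK^{-1}\FF(U,h)$ on the Banach space $Z$, and then apply the Banach contraction mapping theorem on the closed ball $B(\rho_h)$. Invertibility of $\KK$ with $\|\KK^{-1}\|_{L(X,Z)}\le 1/\Theta$ is guaranteed by \eqref{eq7}--\eqref{eq8}, and the nonlinear estimates on $\FF$ are supplied by Lemma~\ref{Lem2}. Denote for brevity $A:=2\om(2\om+4\la\om+\ga)$, so that \eqref{eq3} and \eqref{eq4} read $\|\FF(U,h)\|\le A\|U\|_2^2+\|h\|$ and $\|\FF(U_1,h)-\FF(U_2,h)\|\le A\|U_1-U_2\|_2(\|U_1\|_2+\|U_2\|_2)$.

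For the self-mapping property, for $U\in B(\rho)$ we get
$$\|\TT(U,h)\|_2\le \frac{1}{\Theta}\bigl(A\rho^2+\|h\|\bigr),$$
so $\TT(\,\cdot\,,h):B(\rho)\to B(\rho)$ provided $A\rho^2-\Theta\rho+\|h\|\le 0$. The assumption \eqref{eq9} ensures the discriminant $\Theta^2-4A\|h\|>0$ (even with room to spare, since the stronger $\Theta^2-8A\|h\|>0$ is required), and the smaller root of the quadratic is exactly the $\rho_h$ given in \eqref{eq9.5}. I will use this smallest admissible radius.

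For the contraction property, \eqref{eq4} and \eqref{eq8} yield
$$\|\TT(U_1,h)-\TT(U_2,h)\|_2\le \frac{2A\rho_h}{\Theta}\|U_1-U_2\|_2.$$
A direct computation from \eqref{eq9.5} gives $4A\rho_h=\Theta-\sqrt{\Theta^2-8A\|h\|}<\Theta$, whence the Lipschitz constant $2A\rho_h/\Theta<1/2<1$. Thus Banach's theorem produces a unique fixed point $U(h)\in B(\rho_h)$, obtained as the limit of the iteration $U_{n+1}=\TT(U_n,h)$, which delivers the first part of the theorem.

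Finally, for the Lipschitz dependence \eqref{eq10}, I write
$$U(h_1)-U(h_2)=\KK^{-1}\bigl[\FF(U(h_1),h_1)-\FF(U(h_2),h_1)\bigr]+\KK^{-1}\bigl[\FF(U(h_2),h_1)-\FF(U(h_2),h_2)\bigr],$$
apply \eqref{eq4}, \eqref{eq5}, \eqref{eq8}, and use $\|U(h_i)\|_2\le\rho_{h_i}\le\rho_{\max}$ where $\rho_{\max}$ corresponds to $\max\{\|h_1\|,\|h_2\|\}$. Rearranging gives
$$\|U(h_1)-U(h_2)\|_2\le \frac{\|h_1-h_2\|}{\Theta-4A\rho_{\max}},$$
and the key algebraic identity $\Theta-4A\rho_{\max}=\sqrt{\Theta^2-8A\max\{\|h_1\|,\|h_2\|\}}$ (following directly from \eqref{eq9.5}) produces exactly \eqref{eq10}. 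The only real bookkeeping step is checking that the self-map and contraction constraints are simultaneously satisfied by the same $\rho_h$; the quadratic structure of the self-map condition makes this automatic once the smaller root is chosen, so no essential obstacle arises.
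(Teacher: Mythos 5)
Your proposal follows exactly the paper's route: recast \eqref{eq2} as the fixed point problem $U=\KK^{-1}\FF(U,h)$, take $\rho_h$ to be the smaller root of the self-mapping quadratic, check the contraction property on $B(\rho_h)$ via \eqref{eq4} and \eqref{eq8}, invoke Banach's theorem, and obtain \eqref{eq10} by the standard triangle-inequality rearrangement; the argument is sound in substance. However, with your abbreviation $A:=2\om(2\om+4\la\om+\ga)$ several displayed identities carry a consistent factor-of-two slip: the smaller root of $A\rho^2-\Theta\rho+\|h\|=0$ satisfies $2A\rho_h=\Theta-\sqrt{\Theta^2-4A\|h\|}$ (and $4A\|h\|=8\om(2\om+4\la\om+\ga)\|h\|$, which is why this agrees with \eqref{eq9.5}), not $4A\rho_h=\Theta-\sqrt{\Theta^2-8A\|h\|}$; likewise the correct Lipschitz denominator is $\Theta-2A\rho_{\max}=\sqrt{\Theta^2-4A\max\{\|h_1\|,\|h_2\|\}}$, whereas $\Theta-4A\rho_{\max}$ as you wrote it can even be negative for $\|h\|$ near the bound in \eqref{eq9}. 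For the same reason your parenthetical remark about \eqref{eq9} being ``stronger'' than the discriminant condition is spurious---\eqref{eq9} is precisely $\Theta^2-4A\|h\|>0$---and the claimed contraction constant bound $2A\rho_h/\Theta<1/2$ does not follow from \eqref{eq9} (it would require $\|h\|<3\Theta^2/(16A)$); only the bound $<1$ holds, which is all that is needed. These are bookkeeping errors rather than gaps: once the constants are tracked consistently the proof coincides with the one in the paper.
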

\begin{proof}
We rewrite \eqref{eq2} as a parameterized fixed point problem
$$
U=\RR(U,h):=\KK^{-1}\FF(U,h)
$$
in $Z$. We already know that $\RR : Z\times X\to Z$ is continuous and by \eqref{eq3}, \eqref{eq8} such that
$$
\|\RR(U,h)\|_2\leq \frac{1}{\Theta}\(2\om(2\om+4\la\om+\ga)\|U\|_2^2+\|h\|\).
$$
Next, if there is $\rho_h>0$ such that
\begin{equation}\label{eq12}
    A(\rho_h):=\Theta\rho_h-2\om(2\om+4\la\om+\ga)\rho_h^2=\|h\|,
\end{equation}
then $\RR(\cdot,h)$ maps $B(\rho_h)$ into itself. So it remains to study \eqref{eq12}. In order to find $H>0$ -- the largest right-hand side of \eqref{eq12}
when this equation has a solution $\rho_{H}>0$, we need to solve $A(r)=H$ together with $\du A(r)=0$.
This implies
$$\rho_H=\frac{\Theta}{4\om(2\om+4\la\om+\ga)}$$
and \eqref{eq9}. So assuming \eqref{eq7}, \eqref{eq9}, we know that \eqref{eq12} has a positive solution $\rho_h<\rho_H$. We take the smallest one which clearly has the form \eqref{eq9.5}.
So $\RR(\cdot,h)$ maps $B(\rho_h)$ into itself and, moreover, by \eqref{eq4}, \eqref{eq8}
$$
\|\RR(U_1,h)-\RR(U_2,h)\|_2\leq \frac{4\om(2\om+4\la\om+\ga)\|U_1-U_2\|_2\rho_h}{\Theta}
$$
for any $U_1,U_2\in B(\rho_h)$. Hence, $\RR(\cdot,h)$ is a contraction on $B(\rho_h)$ with a contraction constant
$$
\frac{4\om(2\om+4\la\om+\ga)\rho_h}{\Theta}<\frac{4\om(2\om+4\la\om+\ga)\rho_H}{\Theta}=1.
$$
The proof of the existence and uniqueness is finished by the Banach fixed point theorem \cite{Ber}.
Next, let $h_1,h_2\in X$ satisfy \eqref{eq9}. Then $U(h_i)\in B(\rho_{h_i})\subset B(\rho_{\wt{H}})$ for $i=1,2$ and $\wt{H}:=\max\left\{\|h_1\|,\|h_2\|\right\}$.
Note $\wt{H}$ satisfies \eqref{eq9}. By \eqref{eq4}, \eqref{eq5} and \eqref{eq8}, we derive
$$
\begin{gathered}
    \|U(h_1)-U(h_2)\|_2=\|\RR(U(h_1),h_1)-\RR(U(h_2),h_2)\|_2\\
    \leq \|\RR(U(h_1),h_1)-\RR(U(h_2),h_1)\|_2+\|\RR(U(h_2),h_1)-\RR(U(h_2),h_2)\|_2\\
    \leq \frac{4\om(2\om+4\la\om+\ga)\|U(h_1)-U(h_2)\|_2\rho_{\wt{H}}+\|h_1-h_2\|}{\Theta}
\end{gathered}
$$
which implies \eqref{eq10}.
\end{proof}

\begin{remark}\label{Rem1}
In the following special cases, \eqref{eq7} holds and we can replace $\Theta$ with the corresponding $\Theta_i$ in the above considerations:

1. If $p\in 2\pi\Z$, $\la>\frac{1}{2}$, then
$$\Theta\geq \Theta_1:=\min\{1,\om^2(2\la-1)\}>0.$$

2. Let $\ga=0$. If $p\in \pi\Q$, then we can write $p=\frac{p_1}{p_2}\pi$ for some $p_1\in\Z$, $p_2\in\N$, where $p_1$, $p_2$ are relatively prime (their only common divisor is $1$). Moreover,
\begin{equation}\label{eqRem1.1}
	M_1:=\{\cos kp\mid k\in\Z\backslash\{0\}\}
	\subset \left\{\cos\frac{k\pi}{p_2}\mid k\in\{0,1,\dots,2p_2-1\}\right\}=:M_2.
\end{equation}
Indeed, for each $k\in\Z$ there exist $i,j\in\Z$ such that $0\leq i\leq 2p_2-1$ and
$kp_1=2jp_2+i$. Hence, for $\cos kp\in M_1$ we have
$$\cos kp=\cos\frac{kp_1\pi}{p_2}=\cos\frac{i\pi}{p_2}\in M_2.$$

To find a better relationship between $M_1$ and $M_2$, we need to solve $\cos \frac{\wt k_1p_1}{p_2}\pi=\cos \frac{k_2\pi}{p_2}$ for $\wt k_1\in\Z\backslash\{0\}$ and $k_2\in\{0,1,\dots,2p_2-1\}$. This is equivalent to $\frac{k_1p_1}{p_2}\pi=\frac{k_2\pi}{p_2}+2\pi l$ for $l\in \Z$ and $k_1=\pm\wt k_1$, i.e., $k_1p_1=k_2+2lp_2$. If $p_1$ is odd, then $p_1$ and $2p_2$ are relatively prime. Thus, we know \cite{BML} that there exist $\hat{k}, \hat l\in\Z$ such that $\hat{k}p_1=1+2\hat lp_2$. Consequently, $\hat{k}k_2p_1=k_2+2\hat lk_2p_2$ and $M_1=M_2$. If $p_1$ is even, then $p_1=2\bar p_1$ and $k_12\bar p_1=k_2+2lp_2$, so $k_2=2\bar k_2$ and we have $k_1\bar p_1=\bar k_2+lp_2$. Clearly $\bar p_1$ and $p_2$ are relatively prime. Thus, there exist $\bar{k}, \bar l\in\Z$ such that $\bar{k}\bar p_1=1+\bar lp_2$. Then $\bar{k}\bar k_2\bar p_1=\bar k_2+\bar l\bar k_2p_2$. Hence $M_1=\left\{\cos\frac{2k\pi}{p_2}\mid k\in\{0,1,\dots,p_2-1\}\right\}\subsetneqq M_2$.

Nevertheless, assuming
\begin{equation}\label{eqRem1.2}
	\om^2\notin\left\{\frac{1}{k^2(1-2\la\cos kp)}\right\}_{k\in\Z\backslash\{0\}}
\end{equation}
and
\begin{equation}\label{eqRem1.3}
	0\notin 2\la M_2-1
\end{equation}
for $M_2$ defined in \eqref{eqRem1.1}, we get the existence of $\de>0$ such that
$$
\sqrt{\left(\frac{1}{k^2}+\om^2(2\la\cos kp-1)\right)^2+\frac{\ga^2\om^2}{k^2}}
	\geq \om^2\left|\frac{1}{\om^2k^2}+2\la\cos kp-1\right|\geq\de>0
$$
for each $k\in\Z\backslash\{0\}$.
Thus, if $p\in\pi\Q$ and \eqref{eqRem1.2}, \eqref{eqRem1.3} are valid, then
$$\Theta\geq\min\left\{1,\inf_{k\in\Z\backslash\{0\}}\om^2\left|\frac{1}{\om^2k^2}+2\la\cos kp-1\right|\right\}=:\Theta_2>0.$$

Note that $\frac{1}{k^2(1-2\la\cos kp)}\to 0$ if $|k|\to \infty$. So if $p\in\pi\Q$ and \eqref{eqRem1.3} holds, there is at most a finite number of resonant modes $k_0\in\Z\backslash\{0\}$ determined by equation
\begin{equation}\label{eqRem1.4}
\om^2=\frac{1}{k_0^2(1-2\la\cos k_0p)}.
\end{equation}
More precisely, for given $\la$, $\om$ there is not more than $2(p_2+1)$ resonant modes $k_0$. Indeed, let $\{k_{0j}\}_{j=1}^J$ be an increasing sequence of positive resonant modes corresponding to the fixed $\la$, $\om$. Then $\{1-2\la\cos k_{0j}p\}_{j=1}^J$ has to be decreasing, since
$$k_{0j}^2(1-2\la\cos k_{0j}p)=\frac{1}{\om^2}=\text{const.}\quad \forall j=1,\dots,J,$$
i.e., $\{\cos k_{0j}p\}_{j=1}^J$ is increasing. From properties of $\cos x$ and \eqref{eqRem1.1}, it follows that the longest sequence of the form $\cos k_{0j}p$ has the values
$\{\cos\frac{k\pi}{p_2}\mid k=p_2,\dots,2p_2\}$ which has $p_2+1$ elements. Finally, to each $k_{0j}$ corresponds resonant mode $-k_{0j}$.

3. If $\ga>0$, condition \eqref{eq7} is satisfied even without the non-resonance condition \eqref{eqRem1.2}, since in each resonant mode $k_0$ we have
$$\sqrt{\left(\frac{1}{k_0^2}+\om^2(2\la\cos k_0p-1)\right)^2+\frac{\ga^2\om^2}{k_0^2}}
	=\frac{\ga\om}{|k_0|}>0$$
and there is only the finite number of resonant modes. For each non-resonant mode $k$ it holds
$$\sqrt{\left(\frac{1}{k^2}+\om^2(2\la\cos kp-1)\right)^2+\frac{\ga^2\om^2}{k^2}}
	\geq \left|\frac{1}{k^2}+\om^2(2\la\cos kp-1)\right|>0.$$
Summarizing, if $\ga>0$, $p\in\pi\Q$ and \eqref{eqRem1.3} holds, condition \eqref{eq7} is fulfilled and
$$\Theta\geq \min\left\{1,\min_{k\in M}\frac{\ga\om}{|k|},
	\inf_{k\in\Z\backslash(\{0\}\cup M)}\om^2\left|\frac{1}{\om^2k^2}+2\la\cos kp-1\right|\right\}
	=:\Theta_3>0$$
where $M$ is the set of resonant modes $k_0$.
\end{remark}

\section{Simple resonances}\label{res}

In this section, we investigate the bifurcation of a small solution of \eqref{eq2} under the assumption of a simple resonance described in Remark \ref{Rem1}.2. So, we suppose
\begin{itemize}
	\item[(R)] $p\in\pi\Q$, condition \eqref{eqRem1.3} is valid, for some $k_0\in\Z\backslash\{0\}$ equation \eqref{eqRem1.4} holds and has the only solutions $\pm k_0$,
\end{itemize}
and we consider the equation
\begin{equation}\label{eqR1}
	A(U,\ep):=\KK U-\FF(U)-\ep\GG(U)=0
\end{equation}
with
\begin{equation}\label{eqR2}
	\begin{gathered}
		\KK U:=\om^2(U''(z)-\la U''(z-p)-\la U''(z+p))+U(z)\\
		\FF(U):=-\om^2((U^2(z))''-\la(U^2(z-p))''-\la(U^2(z+p))'')\\
		\GG(U)(z):=-\ga\om(U'(z)+(U^2(z))')+h(z)
	\end{gathered}
\end{equation}
for $U$ and $\ep$ small.

We shall solve equation \eqref{eqR1} via Lyapunov-Schmidt reduction method \cite{Ch}.
Clearly, $A(0,0)=0$ and $\du_U A(0,0)=\KK$. For simplicity we denote $\ka:=|k_0|$ and $R_1^\ka:=\NN\KK$, $R_2^\ka:=\RR\KK$ the null space and the range of the operator $\KK\in L(Z,X)$, respectively. Equation \eqref{lin} with $\ga=0$ together with $\mathrm{(R)}$ yields that if $U\in Z$ with $U(z)=\sum_{k\in\Z}c_k\eu^{k\imath z}$, then
$$\KK U(z)=\sum_{k\in\Z\backslash\{\pm k_0\}}
	(1-\om^2k^2+2\la\om^2k^2\cos kp)c_k\eu^{k\imath z}.$$
Consequently,
\begin{gather*}
	R_1^\ka=\left\{c\eu^{\ka\imath z}+\bar{c}\eu^{-\ka\imath z}\mid c\in\C\right\},\qquad
	R_2^\ka=\left\{U\in X
		\mid U(z)=\sum_{k\in\Z\backslash\{\pm k_0\}}c_k\eu^{k\imath z}\right\}.
\end{gather*}
Let $Q:X\to R_2^\ka$ be the projection onto $R_2^\ka\subset X$ given by
$$QU(z)=\sum_{k\in\Z\backslash\{\pm k_0\}}c_k\eu^{k\imath z}.$$
Now, we take the decomposition $U=U_1+U_2$, $U_1\in R_1^\ka$, $U_2\in R_2^\ka\cap Z$ for $U\in Z$ and decouple \eqref{eqR1} to equations
\begin{gather}
	QA(U_1+U_2,\ep)=0,\label{eqR3}\\
	(\I -Q)A(U_1+U_2,\ep)=0.\label{eqR4}
\end{gather}
Applying the implicit function theorem to the first equation implies the existence of neighbourhoods $V_1\times V_0\subset R_1^\ka\times\R$ of the point $(0,0)$ in $R_1^\ka\times\R$, $V_2\subset R_2^\ka\cap Z$ of $0$ in $R_2^\ka\cap Z$ and a unique $C^\infty$--function $U_2:V_1\times V_0\to V_2$ such that $QA(U_1+\wt{U}_2,\ep)=0$ for $(U_1,\ep)\in V_1\times V_0$ and $\wt{U}_2\in V_2$ if and only if $\wt{U}_2=U_2(U_1,\ep)$. Moreover, $U_2(0,0)=0$.

Thus, equation \eqref{eqR4} has the form
\begin{equation}\label{eqR5}
	(\I-Q)A(U_1+U_2(U_1,\ep),\ep)=0
\end{equation}
for $U_1\in V_1$ and $\ep\in V_0$. Differentiating equation \eqref{eqR3} we obtain
\begin{equation}\label{eqR5b}
\du_{U_1}U_2(0,0)=0,\qquad \du_\ep U_2(0,0)=-\KK^{-1}Q\du_\ep A(0,0)=\KK^{-1}Q\GG(0).
\end{equation}
Hence
\begin{equation}\label{eqU2exp}
	U_2(U_1,\ep)=\KK^{-1}Q\GG(0)\ep+O(\|(U_1,\ep)\|^2).
\end{equation}
Expanding $\FF$ and $\GG$ of \eqref{eqR2} into Taylor series using the last expansion of $U_2(U_1,\ep)$ yields
\begin{gather*}
	\FF(U_1+U_2(U_1,\ep))=\frac{1}{2}\du^2 \FF(0)(U_1+\du_\ep U_2(0,0)\ep)^2+O(\|(U_1,\ep)\|^3),\\
	\GG(U_1+U_2(U_1,\ep))=\GG(0)+\du \GG(0)(U_1+\du_\ep U_2(0,0)\ep)+O(\|(U_1,\ep)\|^2).
\end{gather*}
Note that $\du^2 \FF(0)V^2=2\FF(V)$ and $\ep O(\|(U_1,\ep)\|^2)=O(\|(U_1,\ep)\|^3)$. Thus equation \eqref{eqR5} is equivalent to
\begin{equation*}
	\begin{gathered}
		(\I-Q)(\FF(U_1+\du_\ep U_2(0,0)\ep)+\ep \GG(0)
		+\ep\du \GG(0)(U_1+\du_\ep U_2(0,0)\ep)+O(\|(U_1,\ep)\|^3))=0
	\end{gathered}
\end{equation*}
or, using $\du_\ep U_2(0,0)$ and $\FF$ of \eqref{eqR2},
\begin{equation}\label{eqR5.1}
	\begin{gathered}
		(\I-Q)(\om^2(((U_1(z)+\KK^{-1}Q\GG(0)(z)\ep)^2)''-\la((U_1(z-p)+\KK^{-1}Q\GG(0)(z-p)\ep)^2)''\\
	{}-\la((U_1(z+p)+\KK^{-1}Q\GG(0)(z+p)\ep)^2)'')-\ep \GG(0)(z)
		-\ep\du \GG(0)(z)(U_1+\KK^{-1}Q\GG(0)\ep)+O(\|(U_1,\ep)\|^3))=0.
	\end{gathered}
\end{equation}
Note that
\begin{equation*}
	\begin{gathered}
		((U_1(z)+\KK^{-1}Q\GG(0)(z)\ep)^2)''
		=2(U_1'(z)+(\KK^{-1}Q\GG(0)(z))'\ep)^2\\
		+2(U_1(z)+\KK^{-1}Q\GG(0)(z)\ep)(U_1''(z)+(\KK^{-1}Q\GG(0)(z))''\ep)
	\end{gathered}
\end{equation*}
for any $z\in\R$. Let $U_1(z)=c\eu^{k_0\imath z}+\bar{c}\eu^{-k_0\imath z}\in R_1^\ka$ and
$h(z)=\sum_{k\in\Z}h_k\eu^{k\imath z}\in X$. Then
\begin{equation}\label{eqR5c}
\begin{gathered}
	U_1'(z)=\tilde{c}\eu^{k_0\imath z}+\ol{\tilde{c}}\eu^{-k_0\imath z}\in R_1^\ka,\quad
	\KK^{-1}Q\GG(0)(z)=\sum_{k\in\Z\backslash\{\pm k_0\}}d_k\eu^{k\imath z}\in R_2^\ka,\quad
	(\KK^{-1}Q\GG(0)(z))'=\sum_{k\in\Z\backslash\{0,\pm k_0\}}\tilde{d}_k\eu^{k\imath z}\in R_2^\ka
\end{gathered}
\end{equation}
where
$$\tilde{c}=\imath k_0c,\qquad d_k=\frac{h_k}{1-\om^2k^2+2\la\om^2k^2\cos kp},\qquad
	\tilde{d}_k=\imath kd_k.$$
Clearly, $d_{-k}=\ol{d_k}$, $\tilde{d}_{-k}=\ol{\tilde{d}_k}$ for each $k\in\Z\backslash\{0,\pm k_0\}$.
It can be easily shown that
\begin{equation}\label{eqR6}
	\begin{gathered}
		(\I-Q)(U_1'(z))^2=0=(\I-Q)(U_1(z)U_1''(z)),\\
		(\I-Q)(U_1'(z)(\KK^{-1}Q\GG(0)(z))')=\tilde{c}\ol{\tilde{d}_{2k_0}}\eu^{-k_0\imath z}
			+\ol{\tilde{c}}\tilde{d}_{2k_0}\eu^{k_0\imath z}.
	\end{gathered}
\end{equation}
Next, setting $\tilde{d}_{\pm k_0}=\tilde{d}_0=0$,
$$\sum_{k\in\Z\backslash\{0,\pm k_0\}}\tilde{d}_k\eu^{k\imath z}
	\sum_{j\in\Z\backslash\{0,\pm k_0\}}\tilde{d}_j\eu^{j\imath z}
	=\sum_{l\in\Z}\sum_{k+j=l}\tilde{d}_k\tilde{d}_j\eu^{l\imath z}.$$
Hence the coefficient by $\eu^{k_0\imath z}$ in $((\KK^{-1}Q\GG(0)(z))')^2$ is
$$b:=\sum_{k+j=k_0}\tilde{d}_k\tilde{d}_j=\sum_{k\in\Z}\tilde{d}_k\tilde{d}_{k_0-k}
	=\sum_{k\in\Z}\tilde{d}_k\ol{\tilde{d}_{k-k_0}}
	=\sum_{k\in\Z\backslash\{0,\pm k_0,2k_0\}}\tilde{d}_k\ol{\tilde{d}_{k-k_0}}.$$
Therefore
\begin{equation}\label{eqR7}
	(\I-Q)(((\KK^{-1}Q\GG(0)(z))')^2)=b\eu^{k_0\imath z}+\ol{b}\eu^{-k_0\imath z}.
\end{equation}
Analogically, denoting
$$\tilde{\tilde{d}}_k:=-k^2d_k,\; k\in\Z\backslash\{0,\pm k_0\},
\qquad \tilde{\tilde{c}}:=-k_0^2c,\qquad
\tilde{b}:=\sum_{k\in\Z\backslash\{0,\pm k_0,2k_0\}}
	d_k\ol{\tilde{\tilde{d}}_{k-k_0}}$$
we have
\begin{gather*}
	U_1''(z)=\tilde{\tilde{c}}\eu^{k_0\imath z}
		+\ol{\tilde{\tilde{c}}}\eu^{-k_0\imath z}\in R_1^\ka,\qquad
	(\KK^{-1}Q\GG(0)(z))''=\sum_{k\in\Z\backslash\{0,\pm k_0\}}
		\tilde{\tilde{d}}_k\eu^{k\imath z}\in R_2^\ka,
\end{gather*}
and the next identities can be proved
\begin{equation}\label{eqR8}
	\begin{gathered}
		(\I-Q)(U_1(z)(\KK^{-1}Q\GG(0)(z))'')=\ol{c}\tilde{\tilde{d}}_{2k_0}\eu^{k_0\imath z}
			+c\ol{\tilde{\tilde{d}}_{2k_0}}\eu^{-k_0\imath z},\\
		(\I-Q)(U_1''(z)\KK^{-1}Q\GG(0)(z))=\(\ol{\tilde{\tilde{c}}}d_{2k_0}+\tilde{\tilde{c}}d_0\)
			\eu^{k_0\imath z}+\(\tilde{\tilde{c}}\ol{d_{2k_0}}+\ol{\tilde{\tilde{c}}}d_0\)
			\eu^{-k_0\imath z},\\
		(\I-Q)(\KK^{-1}Q\GG(0)(z)(\KK^{-1}Q\GG(0)(z))'')
			=\tilde{b}\eu^{k_0\imath z}+\ol{\tilde{b}}\eu^{-k_0\imath z}.
	\end{gathered}
\end{equation}
Moreover, by definition of $\GG$ in \eqref{eqR2},
\begin{equation}\label{eqR9}
	\begin{gathered}
		(\I-Q)\GG(0)(z)=h_{k_0}\eu^{k_0\imath z}+\ol{h_{k_0}}\eu^{-k_0\imath z},\\
		(\I-Q)\du \GG(0)(U_1+\KK^{-1}Q\GG(0)(z)\ep)=-\ga\om(\tilde{c}\eu^{k_0\imath z}
			+\ol{\tilde{c}}\eu^{-k_0\imath z}).
	\end{gathered}
\end{equation}
Now, we put \eqref{eqR6}, \eqref{eqR7}, \eqref{eqR8}, \eqref{eqR9} in \eqref{eqR5.1} and collect coefficients by $\eu^{k_0\imath z}$ and $\eu^{-k_0\imath z}$ to obtain
\begin{equation}\label{eqR10}
	H\eu^{k_0\imath z}+\ol{H}\eu^{-k_0\imath z}+O(\|(U_1,\ep)\|^3)=0
\end{equation}
where
\begin{gather*}
	H=2\om^2\(2\ol{\tilde{c}}\tilde{d}_{2k_0}+\ol{c}\tilde{\tilde{d}}_{2k_0}
		+\ol{\tilde{\tilde{c}}}d_{2k_0}+\tilde{\tilde{c}}d_0\)
		(1-\la\eu^{-k_0\imath p}-\la\eu^{k_0\imath p})\ep
	-(h_{k_0}-\ga\om\tilde{c})\ep
		+2\om^2(b+\tilde{b})(1-\la\eu^{-k_0\imath p}-\la\eu^{k_0\imath p})\ep^2.
\end{gather*}
Equation \eqref{eqR10} is equivalent to
\begin{equation}\label{eqR10c}
H+O(\|(c,\ol{c},\ep)\|^3)=0.
\end{equation}
Using resonance condition \eqref{eqRem1.4}, we get
$$(1-\la\eu^{-k_0\imath p}-\la\eu^{k_0\imath p})=1-2\la\cos k_0p=\frac{1}{\om^2k_0^2}.$$
Dropping tildes yields
\begin{gather*}
	2\ol{\tilde{c}}\tilde{d}_{2k_0}+\ol{c}\tilde{\tilde{d}}_{2k_0}
		+\ol{\tilde{\tilde{c}}}d_{2k_0}+\tilde{\tilde{c}}d_0
		=-k_0^2\ol{c}d_{2k_0}-k_0^2cd_0,\\
	b+\tilde{b}=\sum_{k\in\Z\backslash\{0,\pm k_0,2k_0\}} k_0(k-k_0)d_k\ol{d_{k-k_0}}.
\end{gather*}
Thus we have to solve
\begin{gather*}
	-\(2(\ol{c}d_{2k_0}+cd_0)+h_{k_0}-\ga\om\tilde{c}\)\ep
	+\frac{2\ep^2}{k_0^2}\sum_{k\in\Z\backslash\{0,\pm k_0,2k_0\}} k_0(k-k_0)d_k\ol{d_{k-k_0}}
	+O(\|(c,\ol{c},\ep)\|^3)=0.
\end{gather*}
Since $c,h_k\in\C$, we have
$$c=x+\imath y,\qquad h_k=\mu_k+\imath\nu_k,\qquad d_k=\frac{\mu_k+\imath\nu_k}{\de_k}$$
where $x,y,\mu_k,\nu_k\in\R$ and $\de_k=1-\om^2k^2+2\la\om^2k^2\cos kp\in\R$. Obviously, $\de_0=1$, $h_0=d_0=\mu_0\in\R$ and $\mu_k=\mu_{-k}$, $\nu_k=-\nu_{-k}$ $\forall k\in\N$. So, we get
\begin{gather*}
	-\(\frac{2(x-\imath y)(\mu_{2k_0}+\imath\nu_{2k_0})}{\de_{2k_0}}
		+2(x+\imath y)\mu_0+\mu_{k_0}+\imath\nu_{k_0}
		-\ga\om\imath k_0(x+\imath y)\)\ep\\
	+\frac{2\ep^2}{k_0^2}\sum_{k\in\Z\backslash\{0,\pm k_0,2k_0\}}
		\frac{k_0(k-k_0)(\mu_k+\imath\nu_k)(\mu_{k-k_0}-\imath\nu_{k-k_0})}{\de_k\de_{k-k_0}}
		+O(\|(x,y,\ep)\|^3)=0
\end{gather*}
or, separating the real and imaginary parts,
\begin{align*}
	-\(\frac{2(x\mu_{2k_0}+y\nu_{2k_0})}{\de_{2k_0}}+2x\mu_0+\mu_{k_0}
		+\ga\om k_0y\)\ep
	+\frac{2\ep^2}{k_0^2}\sum_{k\in\Z\backslash\{0,\pm k_0,2k_0\}}
		\frac{k_0(k-k_0)(\mu_k\mu_{k-k_0}+\nu_k\nu_{k-k_0})}{\de_k\de_{k-k_0}}
		+O(\|(x,y,\ep)\|^3) &= 0,\\
	-\(\frac{2(x\nu_{2k_0}-y\mu_{2k_0})}{\de_{2k_0}}+2y\mu_0+\nu_{k_0}
		-\ga\om k_0x\)\ep
	+\frac{2\ep^2}{k_0^2}\sum_{k\in\Z\backslash\{0,\pm k_0,2k_0\}}
		\frac{k_0(k-k_0)(\mu_{k-k_0}\nu_k-\mu_k\nu_{k-k_0})}{\de_k\de_{k-k_0}}
		+O(\|(x,y,\ep)\|^3) &= 0.
\end{align*}
Multiplying both equations $(-1)$ and collecting terms by $x$ and $y$ we obtain
\begin{equation}\label{bif1}
\begin{aligned}
	2\(\frac{\mu_{2k_0}}{\de_{2k_0}}+\mu_0\)\ep x
		+\(\frac{2\nu_{2k_0}}{\de_{2k_0}}+\ga\om k_0\)\ep y+\mu_{k_0}\ep
	-\frac{2\ep^2}{k_0^2}\sum_{k\in\Z\backslash\{0,\pm k_0,2k_0\}}
		\frac{k_0(k-k_0)(\mu_k\mu_{k-k_0}+\nu_k\nu_{k-k_0})}{\de_k\de_{k-k_0}}
		+O(\|(x,y,\ep)\|^3)=0,\\
	\(\frac{2\nu_{2k_0}}{\de_{2k_0}}-\ga\om k_0\)\ep x
		+2\(-\frac{\mu_{2k_0}}{\de_{2k_0}}+\mu_0\)\ep y+\nu_{k_0}\ep
	-\frac{2\ep^2}{k_0^2}\sum_{k\in\Z\backslash\{0,\pm k_0,2k_0\}}
		\frac{k_0(k-k_0)(\mu_{k-k_0}\nu_k-\mu_k\nu_{k-k_0})}{\de_k\de_{k-k_0}}
		+O(\|(x,y,\ep)\|^3)=0.
\end{aligned}
\end{equation}
Now we intend to derive the cubic $O(\|(U_1,0)\|^3)$ of \eqref{eqR10} as follows. We denote by $B:=-\frac{1}{2}\du^2\FF(0)$ and $u_2(U_1):=U_2(U_1,0)$. Then \eqref{eqR3} and \eqref{eqR4} are equivalent to
\begin{gather}
	Q(\KK u_2(U_1)+B(U_1+u_2(U_1))^2)=0,\label{eqR3b}\\
	(\I -Q)B(U_1+u_2(U_1))^2=0.\label{eqR4b}
\end{gather}
First note that $u_2(U_1)=O(\|U_1\|^2)$ by \eqref{eqU2exp}, $(\I -Q)BU_1^2=0$ and then \eqref{eqR4b} has the form
\begin{gather*}
0=(\I -Q)B(U_1+u_2(U_1))^2
=(\I -Q)BU_1^2+2(I-Q)B(U_1,u_2(U_1))+(I-Q)B(u_2(U_1))^2\\
=(I-Q)B(U_1,\du_{U_1}^2u_2(0)U_1^2)+O(\|U_1\|^4),
\end{gather*}
hence $O(\|(U_1,0)\|^3)=(I-Q)B(U_1,\du_{U_1}^2u_2(0)U_1^2)$. To derive $\du_{U_1}^2u_2(0)$, we twice differentiate \eqref{eqR3b} to obtain
$$
	\KK \du_{U_1}u_2(U_1)V_1+2QB(U_1+u_2(U_1),V_1+\du_{U_1}u_2(U_1)V_1)=0
$$
and then
$$
    \KK \du_{U_1}^2u_2(0)(V_1,V_2)+2QB(V_1,V_2)=0,
$$
hence
$$
\du_{U_1}^2u_2(0)U_1^2=-2\KK^{-1}QBU_1^2=2\KK^{-1}Q\FF(U_1),
$$
consequently, we arrive at
\begin{equation}\label{eqp1}
	\begin{gathered}
		O(\|(U_1,0)\|^3)=(I-Q)B(U_1,\du_{U_1}^2u_2(0)U_1^2)
		=-(I-Q)\du^2\FF(0)(U_1,\KK^{-1}Q\FF(U_1)).
	\end{gathered}
\end{equation}
Let $U_1(z)=c\eu^{k_0\imath z}+\bar{c}\eu^{-k_0\imath z}\in R_1^\ka$. Then
\begin{gather*}
\KK^{-1}Q\FF(U_1)=\KK^{-1}Q\(4\om^2k_0^2(1-2\la\cos2k_0p)\(c^2\eu^{2k_0\imath z}+\bar{c}^2\eu^{-2k_0\imath z}\)\)
=\frac{1-\de_{2k_0}}{\de_{2k_0}}c^2\eu^{2k_0\imath z}
	+\frac{1-\de_{2k_0}}{\de_{2k_0}}\bar{c}^2\eu^{-2k_0\imath z}.
\end{gather*}
Next we derive
\begin{gather*}
	\du^2\FF(0)(V,W)
	=-2\om^2((V(z)W(z))''-\la(V(z-p)W(z-p))''-\la(V(z+p)W(z+p))'')\\
	=-2\om^2\Big(V(z)W''(z)+V''(z)W(z)+2V'(z)W'(z)\\
	{}-\la\big(V(z-p)W''(z-p)+V''(z-p)W(z-p)+2V'(z)W'(z)\big)\\
	{}-\la\big(V(z+p)W''(z+p)+V''(z+p)W(z+p)+2V'(z)W'(z)\big)\Big).
\end{gather*}
Denoting $\mu=\frac{1-\de_{2k_0}}{\de_{2k_0}}$ we get
\begin{gather*}
	U_1(z)[\KK^{-1}Q\FF(U_1)(z)]''=-4k_0^2 U_1(z)[\KK^{-1}Q\FF(U_1)(z)]
	=-4k_0^2\Big(\mu c^3\eu^{3k_0\imath z}+\mu\bar{c}^3\eu^{-3k_0\imath z}
		+\mu c^2\bar{c}\eu^{k_0\imath z}+\mu c\bar{c}^2\eu^{-k_0\imath z}\Big),\\
	U_1''(z)[\KK^{-1}Q\FF(U_1)(z)]=-k_0^2 U_1(z)[\KK^{-1}Q\FF(U_1)(z)]
	=-k_0^2\Big(\mu c^3\eu^{3k_0\imath z}+\mu\bar{c}^3\eu^{-3k_0\imath z}
		+\mu c^2\bar{c}\eu^{k_0\imath z}+\mu c\bar{c}^2\eu^{-k_0\imath z}\Big),\\
	U_1'(z)[\KK^{-1}Q\FF(U_1)(z)]'
		=\Big(k_0\imath c\eu^{k_0\imath z}-k_0\imath\bar{c}\eu^{-k_0\imath z}\Big)
		\Big(2k_0\imath\mu c^2\eu^{2k_0\imath z}-2k_0\imath\mu\bar{c}^2\eu^{-2k_0\imath z}\Big)\\
	=-2k_0^2\mu c^3\eu^{3k_0\imath z}-2k_0^2\mu\bar{c}^3\eu^{-3k_0\imath z}
		+2k_0^2\mu c^2\bar{c}\eu^{k_0\imath z}+2k_0^2\mu c\bar{c}^2\eu^{-k_0\imath z}.
\end{gather*}
So
$$(I-Q)\Big(U_1(z)[\KK^{-1}Q\FF(U_1)(z)]''+2U_1'(z)[\KK^{-1}Q\FF(U_1)(z)]'\Big)=0.$$
Hence
\begin{gather*}
	-(I-Q)\du^2\FF(0)(U_1,\KK^{-1}Q\FF(U_1))
	=2\om^2\Big(\mu c^2\bar{c}\eu^{k_0\imath z}(-k_0^2)+\mu c\bar{c}^2\eu^{-k_0\imath z}(-k_0^2)\\
	{}-\la\mu c^2\bar{c}\eu^{k_0\imath (z-p)}(-k_0^2)
		-\la\mu c\bar{c}^2\eu^{-k_0\imath (z-p)}(-k_0^2)
	-\la\mu c^2\bar{c}\eu^{k_0\imath (z+p)}(-k_0^2)
		-\la\mu c\bar{c}^2\eu^{-k_0\imath (z+p)}(-k_0^2)\Big)\\
	=-2k_0^2\om^2\mu|c|^2\Big(U_1(z)-\la U_1(z-p)-\la U_1(z+p)\Big)
	=-2k_0^2\om^2\mu|c|^2(1-\la\eu^{k_0\imath p}-\la \eu^{-k_0\imath p})U_1(z)\\
	=-2\mu|c|^2U_1(z)
		=-\frac{2(1-\de_{2k_0})}{\de_{2k_0}}|c|^2 \(c\eu^{k_0\imath z}+\bar{c}\eu^{-k_0\imath z}\).
\end{gather*}
Thus, by \eqref{eqp1} and the last identity, in \eqref{eqR10c} we have
$$
O(\|(c,\ol{c},0)\|^3)=-\frac{2c|c|^2(1-\de_{2k_0})}{\de_{2k_0}}.
$$
Summarizing, the bifurcation equation \eqref{bif1} has a form
\begin{equation}\label{bif2}
\begin{gathered}
	2\(\frac{\mu_{2k_0}}{\de_{2k_0}}+\mu_0\)\ep x
		+\(\frac{2\nu_{2k_0}}{\de_{2k_0}}+\ga\om k_0\)\ep y+\mu_{k_0}\ep+\frac{2(1-\de_{2k_0})x(x^2+y^2)}{\de_{2k_0}}\\
	-\frac{2\ep^2}{k_0^2}\sum_{k\in\Z\backslash\{0,\pm k_0,2k_0\}}
		\frac{k_0(k-k_0)(\mu_k\mu_{k-k_0}+\nu_k\nu_{k-k_0})}{\de_k\de_{k-k_0}}
		+\ep O(\|(x,y,\ep)\|^2)=0,\\
	\(\frac{2\nu_{2k_0}}{\de_{2k_0}}-\ga\om k_0\)\ep x
		+2\(-\frac{\mu_{2k_0}}{\de_{2k_0}}+\mu_0\)\ep y+\nu_{k_0}\ep+\frac{2(1-\de_{2k_0})y(x^2+y^2)}{\de_{2k_0}}\\
	-\frac{2\ep^2}{k_0^2}\sum_{k\in\Z\backslash\{0,\pm k_0,2k_0\}}
		\frac{k_0(k-k_0)(\mu_{k-k_0}\nu_k-\mu_k\nu_{k-k_0})}{\de_k\de_{k-k_0}}
		+\ep O(\|(x,y,\ep)\|^2)=0.
\end{gathered}
\end{equation}
Now, we scale \eqref{bif2}
\begin{equation}\label{eqScala}
	x\longleftrightarrow\ep x,\qquad y\longleftrightarrow\ep y,
		\qquad \ep\longleftrightarrow\ep^3,\qquad \ga\longleftrightarrow\ga
\end{equation}
and divide both equations by $\ep^3$. This leads to
\begin{equation}\label{eqR11a}
	\begin{aligned}
		\mu_{k_0}+\frac{2(1-\de_{2k_0})x(x^2+y^2)}{\de_{2k_0}}+O(\ep) &= 0\\
		\nu_{k_0}+\frac{2(1-\de_{2k_0})y(x^2+y^2)}{\de_{2k_0}}+O(\ep) &= 0
	\end{aligned}
\end{equation}
and the following result holds.

\begin{theorem}\label{Th2}
Assume $\mathrm{(R)}$ and $(\mu_{k_0},\nu_{k_0})\ne(0,0)$. Then equation \eqref{eqR1} has a solution $U\in Z$ close to $0$ for any $\ep\neq 0$ sufficiently small.
\end{theorem}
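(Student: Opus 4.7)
The plan is to treat the scaled bifurcation system \eqref{eqR11a} as a perturbed two-dimensional algebraic problem and solve it by the implicit function theorem. Once a branch $(x(\ep),y(\ep))$ is produced, the Lyapunov-Schmidt reduction already carried out above (in particular the smooth branch $U_2(U_1,\ep)$ of the range equation \eqref{eqR3}) reassembles $U:=U_1+U_2(U_1,\ep)\in Z$ as a genuine solution of \eqref{eqR1}, so the whole argument reduces to solving \eqref{eqR11a} for small $\ep$.

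Setting $\ep=0$ in \eqref{eqR11a} and writing $\al:=2(1-\de_{2k_0})/\de_{2k_0}$, the limit cubic system is
\begin{equation*}
\mu_{k_0}+\al x(x^2+y^2)=0,\qquad \nu_{k_0}+\al y(x^2+y^2)=0.
\end{equation*}
Squaring and adding gives $\al^2(x^2+y^2)^3=\mu_{k_0}^2+\nu_{k_0}^2$; since $(\mu_{k_0},\nu_{k_0})\neq(0,0)$, this uniquely determines $r_0:=\sqrt{x_0^2+y_0^2}>0$, and then $(x_0,y_0)=-(\mu_{k_0},\nu_{k_0})/(\al r_0^2)$ is an explicit nonzero real root. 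A direct calculation of the Jacobian of the cubic map at $(x_0,y_0)$ yields determinant $3\al^2 r_0^4\neq 0$, so the implicit function theorem produces a smooth branch $(x(\ep),y(\ep))\to(x_0,y_0)$ as $\ep\to 0$ that solves \eqref{eqR11a}.

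Undoing the scaling \eqref{eqScala} gives $c(\ep)=\ep(x(\ep)+\imath y(\ep))$ solving the unscaled bifurcation equation \eqref{bif2} at original parameter value $\ep^3$; the corresponding $U_1(z)=c\eu^{k_0\imath z}+\ol c\,\eu^{-k_0\imath z}$ has $\|U_1\|_2=O(|\ep|^{1/3})$ in the original small parameter, and \eqref{eqU2exp} forces $\|U_2(U_1,\ep)\|_2=O(|\ep|^{1/3})$ as well, so $U\in Z$ is close to $0$ and solves \eqref{eqR1}. The one place where care is needed is the nondegeneracy $\al\neq 0$, i.e.\ $\de_{2k_0}\neq 1$: on the exceptional locus $\de_{2k_0}=1$ the leading cubic drops out of \eqref{eqR11a}, the limit $\ep=0$ system reduces to the inconsistent demand $(\mu_{k_0},\nu_{k_0})=(0,0)$, and one would need to push the Lyapunov-Schmidt expansion further; barring this non-generic degeneracy, the argument is a clean implicit function theorem deformation of the explicit leading-order root.
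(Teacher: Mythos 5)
Your argument is correct and coincides with the paper's own proof: both apply the implicit function theorem to \eqref{eqR11a} at the explicit nonzero root $(x_0,y_0)=-\sqrt[3]{\de_{2k_0}/(2(1-\de_{2k_0})(\mu_{k_0}^2+\nu_{k_0}^2))}\,(\mu_{k_0},\nu_{k_0})$ of the limiting cubic system and then undo the scaling \eqref{eqScala}, with your explicit Jacobian determinant $3\al^2r_0^4\neq0$ supplying the nondegeneracy that the paper leaves implicit. The one caveat you flag, $\de_{2k_0}=1$ (i.e.\ $\al=0$), is in fact already excluded by assumption $\mathrm{(R)}$: since $\cos 2k_0p\in M_2$, condition \eqref{eqRem1.3} forbids $2\la\cos 2k_0p=1$, so no extra genericity hypothesis is needed.
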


\begin{proof}
Let us denote
\begin{equation*}
	\wt{H}_1(x,y,\ep) := \(\mu_{k_0}+\frac{2(1-\de_{2k_0})x(x^2+y^2)}{\de_{2k_0}}+O(\ep),
		\nu_{k_0}+\frac{2(1-\de_{2k_0})y(x^2+y^2)}{\de_{2k_0}}+O(\ep)\).
\end{equation*}
Implicit function theorem applied to equation \eqref{eqR11a}, i.e.
$\wt{H}_1(x,y,\ep)=0$ gives the existence of neighbourhoods $W_0\subset V_0\subset\R$ of the point $0$ in $\R$, $W\subset\R^2$ of
$-\sqrt[3]{\frac{\de_{2k_0}}{2(1-\de_{2k_0})(\mu_{k_0}^2+\nu_{k_0}^2)}}(\mu_{k_0},\nu_{k_0})$ in $\R^2$ and a unique pair of continuous functions $x(\ep)$, $y(\ep)$ defined in $W_0$ such that
$\wt{H}_1(\tilde{x},\tilde{y},\ep)=0$ for $\ep\in W_0$, $(\tilde{x},\tilde{y})\in W$ if and only if $\tilde{x}=x(\ep)$ and $\tilde{y}=y(\ep)$. Moreover,
$(x(0),y(0))=-\sqrt[3]{\frac{\de_{2k_0}}{2(1-\de_{2k_0})(\mu_{k_0}^2+\nu_{k_0}^2)}}(\mu_{k_0},\nu_{k_0})$.
After scaling backwards in \eqref{eqScala}, we have the desired solution of equation \eqref{eqR1}.
\end{proof}

Now, assume $\ep h$ instead of $h$ in \eqref{eqR1}. Then we scale \eqref{bif2}
\begin{equation}\label{eqScalb}
	x\longleftrightarrow\ep x,\qquad y\longleftrightarrow\ep y,
		\qquad h_k\longleftrightarrow\ep h_k,\qquad \ga\longleftrightarrow\ga
\end{equation}
and divide both equations by $\ep^2$. This leads to
\begin{equation}\label{eqR11b}
	\begin{aligned}
		\ga\om k_0y+\mu_{k_0}+O(\ep) &= 0\\
		-\ga\om k_0x+\nu_{k_0}+O(\ep) &= 0
	\end{aligned}
\end{equation}
and the following result.

\begin{theorem}\label{Th3}
Assume $\mathrm{(R)}$, $\ga\neq 0$ and $\ep h$ instead of $h$ in \eqref{eqR1} with $(\mu_{k_0},\nu_{k_0})\ne(0,0)$. Then equation \eqref{eqR1} has a solution $U\in Z$ close to $0$ for any $\ep\neq 0$ sufficiently small.
\end{theorem}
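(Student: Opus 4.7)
The plan is to follow exactly the strategy of Theorem \ref{Th2}: apply the implicit function theorem to the reduced two-dimensional system \eqref{eqR11b}. All of the nontrivial work---the Lyapunov-Schmidt decomposition from the infinite-dimensional equation \eqref{eqR1} down to the bifurcation equation \eqref{bif2}, together with the subsequent scaling \eqref{eqScalb}---has already been performed, so the remaining step is simply to find a leading-order solution at $\ep = 0$ and verify nondegeneracy of the linearisation. The structural difference from Theorem \ref{Th2} is that the presence of damping $\ga\ne 0$ makes the leading order \emph{linear} rather than cubic, which actually simplifies matters.

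Setting $\ep = 0$ in \eqref{eqR11b} produces the linear system $\ga\om k_0 y + \mu_{k_0} = 0$, $-\ga\om k_0 x + \nu_{k_0} = 0$, whose unique solution is
$$
(x^*, y^*) = \left(\frac{\nu_{k_0}}{\ga\om k_0},\; -\frac{\mu_{k_0}}{\ga\om k_0}\right),
$$
well defined and nonzero because $\ga \ne 0$, $\om > 0$, $k_0 \in \Z\setminus\{0\}$ and $(\mu_{k_0},\nu_{k_0}) \ne (0,0)$. Writing the left-hand sides of \eqref{eqR11b} as $\wt H_2(x,y,\ep)$, the Jacobian at $(x^*,y^*,0)$ is
$$
\du_{(x,y)}\wt H_2(x^*,y^*,0) = \begin{pmatrix} 0 & \ga\om k_0 \\ -\ga\om k_0 & 0 \end{pmatrix},
$$
whose determinant $\ga^2\om^2 k_0^2$ is nonzero. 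The implicit function theorem therefore yields continuous $x(\ep), y(\ep)$ on a neighbourhood of $0$ with $(x(0),y(0)) = (x^*,y^*)$ and $\wt H_2(x(\ep),y(\ep),\ep) = 0$.

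Reversing the scaling \eqref{eqScalb} gives $c = \ep\, x(\ep) + \imath\ep\, y(\ep) = O(\ep)$, hence $U_1 = c\eu^{k_0\imath z} + \bar c\eu^{-k_0\imath z} \in R_1^\ka$ of norm $O(\ep)$. Since $h$ has been replaced by $\ep h$, the perturbation $\GG$ now enters at order $\ep$ and so $U_2(U_1,\ep)$ from the Lyapunov-Schmidt reduction is also $O(\ep)$ by \eqref{eqU2exp}. Consequently $U = U_1 + U_2(U_1,\ep)\in Z$ is the desired small solution of \eqref{eqR1} with $\ep h$ in place of $h$. I foresee no essential obstacle; the only point worth checking is that the $O(\ep)$ remainders in \eqref{eqR11b} depend smoothly on $(x,y,\ep)$ near $(x^*,y^*,0)$, and this is automatic because $\FF$ and $\GG$ of \eqref{eqR2} are polynomial in $U$ and affine in $h$, which makes the whole reduction $C^\infty$ and legitimises the IFT application without further work.
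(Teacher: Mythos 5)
Your proposal is correct and follows essentially the same route as the paper: the authors likewise apply the implicit function theorem to $\wt H_2(x,y,\ep)=0$ at the point $\bigl(\frac{\nu_{k_0}}{\ga\om k_0},-\frac{\mu_{k_0}}{\ga\om k_0}\bigr)$ and then undo the scaling \eqref{eqScalb}. Your explicit verification of the Jacobian determinant $\ga^2\om^2k_0^2\ne 0$ and of the smoothness of the $O(\ep)$ remainders merely makes visible what the paper leaves implicit.
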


\begin{proof}
Let us denote
\begin{gather*}
	\wt{H}_2(x,y,\ep):=\(\ga\om k_0y+\mu_{k_0}+O(\ep),-\ga\om k_0x+\nu_{k_0}+O(\ep)\).
\end{gather*}
Implicit function theorem applied to equation \eqref{eqR11b}, i.e.
$\wt{H}_2(x,y,\ep)=0$ gives the existence of neighbourhoods $W_0\subset V_0\subset\R$ of the point $0$ in $\R$, $W\subset\R^2$ of
$(\frac{\nu_{k_0}}{\ga\om k_0},-\frac{\mu_{k_0}}{\ga\om k_0})$ in $\R^2$ and a unique pair of continuous functions $x(\ep)$, $y(\ep)$ defined in $W_0$ such that
$\wt{H}_2(\tilde{x},\tilde{y},\ep)=0$ for $\ep\in W_0$, $(\tilde{x},\tilde{y})\in W$ if and only if $\tilde{x}=x(\ep)$ and $\tilde{y}=y(\ep)$. Moreover,
$x(0)=\frac{\nu_{k_0}}{\ga\om k_0}$, $y(0)=-\frac{\mu_{k_0}}{\ga\om k_0}$.
After scaling backwards in \eqref{eqScalb}, we have the desired solution of equation \eqref{eqR1}.
\end{proof}

Finally, we deal with a gap between Theorems \ref{Th2} and \ref{Th3}, namely that $h$ is not scaled and $\mu_{k_0}=\nu_{k_0}=0$. Now, we scale \eqref{bif2}
\begin{equation}\label{eqScalc}
	x\longleftrightarrow\ep x,\qquad y\longleftrightarrow\ep y,
		\qquad \ep\longleftrightarrow\pm\ep^2,\qquad \ga\longleftrightarrow\ga
\end{equation}
and divide both equations by $\ep^3$. This leads to
\begin{equation}\label{eqR11c}
	\begin{aligned}
	\pm2\(\frac{\mu_{2k_0}}{\de_{2k_0}}+\mu_0\)x
		\pm\(\frac{2\nu_{2k_0}}{\de_{2k_0}}+\ga\om k_0\)y
		+\frac{2(1-\de_{2k_0})x(x^2+y^2)}{\de_{2k_0}} +O(\ep) &= 0,\\
	\pm\(\frac{2\nu_{2k_0}}{\de_{2k_0}}-\ga\om k_0\)x
		\pm2\(-\frac{\mu_{2k_0}}{\de_{2k_0}}+\mu_0\) y
		+\frac{2(1-\de_{2k_0})y(x^2+y^2)}{\de_{2k_0}}+O(\ep) &= 0
    \end{aligned}
\end{equation}
and the following result.

\begin{theorem}\label{Th4}
Assume $\mathrm{(R)}$ and $(\mu_{k_0},\nu_{k_0})=(0,0)$. Then equation \eqref{eqR1} has a solution $U\in Z$ close to $0$ for any $\ep\neq 0$ sufficiently small provided
\begin{equation}\label{con1}
\de_{2k_0}^2(4\mu_0^2+\ga^2\om^2k_0^2)-4(\mu_{2k_0}^2+\nu_{2k_0}^2)\ne0.
\end{equation}
The order of this solution is $O(\ep)$. If, in addition, it holds
\begin{equation}\label{con2}
	\de_{2k_0}^2(4\mu_0^2+\ga^2\om^2k_0^2)-4(\mu_{2k_0}^2+\nu_{2k_0}^2)<0,
\end{equation}
then \eqref{eqR1} has another solution. This one is of order $O(\sqrt{|\ep|})$.
\end{theorem}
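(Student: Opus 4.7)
My plan is to reduce Theorem \ref{Th4} to two separate applications of the implicit function theorem to system \eqref{eqR11c}: one at the trivial root of the leading-order problem, producing the $O(\ep)$ branch, and one at a nontrivial root, producing the $O(\sqrt{|\ep|})$ branch. Introduce the shorthand $\vec v=(x,y)^T$, $\alpha:=2(1-\de_{2k_0})/\de_{2k_0}$, and let $L$ denote the $2\times2$ matrix of the linear-in-$(x,y)$ coefficients in \eqref{eqR11c}. Then \eqref{eqR11c} reads $\pm L\vec v+\alpha|\vec v|^2\vec v+O(\ep)=0$, and a short computation gives
\[
\det L = \frac{\de_{2k_0}^2(4\mu_0^2+\ga^2\om^2k_0^2)-4(\mu_{2k_0}^2+\nu_{2k_0}^2)}{\de_{2k_0}^2},
\]
so hypothesis \eqref{con1} is exactly $\det L\neq0$ and \eqref{con2} is $\det L<0$.

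For the $O(\ep)$ branch, $\vec v=0$ always solves the leading-order system and the Jacobian there is $\pm L$, which is invertible by \eqref{con1}. The implicit function theorem (applied exactly as in the proofs of Theorems~\ref{Th2} and~\ref{Th3}) yields a unique continuous branch $\vec v(\ep)=O(\ep)$. Undoing the scaling \eqref{eqScalc}, the original $(x,y)$-amplitude is $\ep\,\vec v(\ep)=O(\ep^2)$ while the physical small parameter equals $\pm\ep^2$, which gives the advertised solution of order $O(\ep)$.

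For the $O(\sqrt{|\ep|})$ branch under \eqref{con2}, I first locate a nonzero root $\vec v_0$ of the leading problem, i.e.\ $(\pm L+\alpha|\vec v_0|^2 I)\vec v_0=0$; this demands $-\alpha|\vec v_0|^2$ to be an eigenvalue of $\pm L$. Since $\det L<0$ makes the spectrum of $L$ real with opposite signs, for each sign choice $\pm$ exactly one eigenvalue of $\pm L$ has sign opposite to $\alpha$, so $|\vec v_0|^2>0$ is well defined and $\vec v_0$ is a corresponding eigenvector. To continue $\vec v_0$ in $\ep$ via IFT I must check invertibility of
\[
J_0 = \pm L+\alpha|\vec v_0|^2 I+2\alpha\vec v_0\vec v_0^T
\]
(the derivative of $\pm L\vec v+\alpha|\vec v|^2\vec v$ at $\vec v_0$). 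In the eigenbasis of $L$, with $\vec v_0$ as the first basis vector, $J_0$ becomes upper triangular with diagonal entries $2\alpha|\vec v_0|^2$ and $\lambda_2-\lambda_1$, both nonzero. IFT then produces a branch $\vec v(\ep)=\vec v_0+O(\ep)$ of $O(1)$ size, and undoing \eqref{eqScalc} yields a solution of \eqref{eqR1} of amplitude $\ep\,|\vec v_0|=O(\sqrt{|\ep_{\mathrm{orig}}|})$.

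The only delicate step is verifying $\det J_0\neq0$: because $\vec v_0$ lies in the kernel of $\pm L+\alpha|\vec v_0|^2 I$, the rank-one correction $2\alpha\vec v_0\vec v_0^T$ coming from the cubic nonlinearity is precisely what lifts the null direction, and it does so exactly when $\alpha\neq0$ and the two eigenvalues of $L$ are distinct; the latter is guaranteed by \eqref{con2}. This is the characteristic bifurcation-from-a-nontrivial-equilibrium obstruction, and it is resolved cleanly by the specific structure of the cubic term.
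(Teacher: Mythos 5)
Your proposal is correct, and for the part that matters most it takes a genuinely different route from the paper. The $O(\ep)$ branch is handled identically in both: \eqref{con1} is $\det L\ne0$ for the linearization at the trivial root of \eqref{eqR11c}, and the implicit function theorem gives the branch. For the $O(\sqrt{|\ep|})$ solution, however, the paper does \emph{not} linearize at a nontrivial root: it uses a Brouwer degree excision argument, computing $\deg(\wt H_3,B_1,(0,0))=\sgn\det A=-1$ on a small ball (from \eqref{con2}) and $\deg(\wt H_3,B_2,(0,0))=+1$ on a large ball (where the cubic $\alpha|\vec v|^2\vec v$ dominates), so a zero must sit in the annulus $B_2\setminus B_1$. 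That argument is softer — it needs no nondegeneracy and only localizes the solution to an annulus — and the paper explicitly sidesteps the Jacobian at the nontrivial roots, remarking (Remark \ref{Rem2}.1) that verifying surjectivity of $\du\wt H_3$ there "is rather awkward" and settling for a genericity statement via \eqref{det1}--\eqref{det2}. Your argument instead observes that any nontrivial root of $\pm L\vec v+\alpha|\vec v|^2\vec v=0$ must be an eigenvector of $\pm L$ with eigenvalue $-\alpha|\vec v|^2$; since \eqref{con2} forces real eigenvalues of opposite sign and condition \eqref{eqRem1.3} in $\mathrm{(R)}$ gives $\de_{2k_0}\ne1$, i.e.\ $\alpha\ne0$, exactly one eigendirection yields a pair $\pm\vec v_0$, and your triangularization of $J_0=\pm L+\alpha|\vec v_0|^2 I+2\alpha\vec v_0\vec v_0^{T}$ in the eigenbasis gives $\det J_0=2\alpha|\vec v_0|^2(\la_2-\la_1)\ne0$ unconditionally. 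This is a genuine strengthening: it yields local uniqueness, smooth dependence on $\ep$, the explicit leading-order profile (an eigenvector of $L$ of prescribed norm), and it settles affirmatively, under \eqref{con2} alone, the nondegeneracy question the paper leaves at the level of genericity. The only points worth stating explicitly in a final write-up are that $\alpha\ne0$ follows from \eqref{eqRem1.3}, and that the $O(\ep)$ remainder of \eqref{eqR11c} is $C^1$-small uniformly on a fixed ball in $(x,y)$ (which holds because, before scaling, $(x,y)$ of size $O(1)$ corresponds to $U_1=O(\ep)$, well inside the domain of the Lyapunov--Schmidt reduction).
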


\begin{proof}
Let us denote
\begin{align*}
	\wt{H}_3(x,y,\ep) &:=
   \Bigg(\pm2\(\frac{\mu_{2k_0}}{\de_{2k_0}}+\mu_0\)x
		\pm\(\frac{2\nu_{2k_0}}{\de_{2k_0}}+\ga\om k_0\)y
		+\frac{2(1-\de_{2k_0})x(x^2+y^2)}{\de_{2k_0}} +O(\ep),\\
	&\qquad \pm\(\frac{2\nu_{2k_0}}{\de_{2k_0}}-\ga\om k_0\)x
		\pm2\(-\frac{\mu_{2k_0}}{\de_{2k_0}}+\mu_0\)y
		+\frac{2(1-\de_{2k_0})y(x^2+y^2)}{\de_{2k_0}}+O(\ep)\Bigg).
\end{align*}
First we note that $\wt{H}_3(x,y,\ep)=A(x,y)+B(x,y)+O(\ep)$ where $A$ is linear and $B$ is cubic. Since \eqref{con1} is equivalent to $\det A\ne0$, implicit function theorem applied to equation \eqref{eqR11c}, i.e.
$\wt{H}_3(x,y,\ep)=0$ gives the existence of neighbourhoods $W_0\subset V_0\subset\R$ of the point $0$ in $\R$, $W\subset\R^2$ of
$(0,0)$ in $\R^2$ and a unique pair of continuous functions $x(\ep)$, $y(\ep)$ defined in $W_0$ such that
$\wt{H}_3(\tilde{x},\tilde{y},\ep)=0$ for $\ep\in W_0$, $(\tilde{x},\tilde{y})\in W$ if and only if $\tilde{x}=x(\ep)$ and $\tilde{y}=y(\ep)$. Moreover, $(x(0),y(0))=(0,0)$, so $x(\ep)=O(\ep)$, $y(\ep)=O(\ep)$, which by scaling \eqref{eqScalc} gives $O(\ep)$ solution for \eqref{eqR1}. On the other hand, if \eqref{con2} holds then the Brouwer degree $\deg(\wt H_3,B_1,(0,0))=\sgn\det A=-1$ while $\deg(\wt H_3,B_2,(0,0))=1$ for small and large balls $B_1$, $B_2$ in $\R^2$ centered at $(0,0)$. This gives another solution of \eqref{eqR11c} of order $O(1)$ placed in the set $B_2\setminus B_1$. After scaling backwards in \eqref{eqScalc}, we have the desired $O(\sqrt{|\ep|})$ solution of \eqref{eqR1}.
\end{proof}

\begin{remark} The above $O(\ep)$ solution for \eqref{eqR1} can be derived by putting Taylor expansions $x(\ep)=x'(0)\ep+O(\ep^2)$, $y(\ep)=y'(0)\ep+O(\ep^2)$ into \eqref{bif2} and comparing $\ep^2$-terms. So we derive
\begin{equation}\label{der}
\begin{aligned}
	2\(\frac{\mu_{2k_0}}{\de_{2k_0}}+\mu_0\)x'(0)
		+\(\frac{2\nu_{2k_0}}{\de_{2k_0}}+\ga\om k_0\)y'(0)
	&= \frac{2}{k_0^2}\sum_{k\in\Z\backslash\{0,\pm k_0,2k_0\}}
		\frac{k_0(k-k_0)(\mu_k\mu_{k-k_0}+\nu_k\nu_{k-k_0})}{\de_k\de_{k-k_0}},\\
	\(\frac{2\nu_{2k_0}}{\de_{2k_0}}-\ga\om k_0\)x'(0)
		+2\(-\frac{\mu_{2k_0}}{\de_{2k_0}}+\mu_0\)y'(0)
	&= \frac{2}{k_0^2}\sum_{k\in\Z\backslash\{0,\pm k_0,2k_0\}}
		\frac{k_0(k-k_0)(\mu_{k-k_0}\nu_k-\mu_k\nu_{k-k_0})}{\de_k\de_{k-k_0}},
\end{aligned}
\end{equation}
which determines the first order approximation of this solution. Hence by \eqref{eqU2exp} and \eqref{eqR5c}, we get
\begin{equation}\label{sol1}
U(z)=U_1(z)+U_2(U_1,\ep)=\ep\(c'(0)\eu^{k_0\imath z}+\overline{c'(0)}\eu^{-k_0\imath z}+\sum_{k\in\Z\backslash\{\pm k_0\}}d_k\eu^{k\imath z}\)+O(\ep^2)
\end{equation}
for $c'(0)=x'(0)+y'(0)\imath$.
\end{remark}

\begin{remark}\label{Rem2}\

1. Now we specify the $O(\sqrt{|\ep|})$ solution (solutions) of Theorem \ref{Th4}. Nontrivial roots of equation
\begin{equation}\label{eqTh4_1}
	\wt{H}_3(x,y,0)=0
\end{equation}
can be explicitly calculated as follows. Denoting
\begin{equation}\label{abcd}
	\begin{gathered}
		a:=\pm2\(\frac{\mu_{2k_0}}{\de_{2k_0}}+\mu_0\),\qquad
		b:=\pm\(\frac{2\nu_{2k_0}}{\de_{2k_0}}+\ga\om k_0\),\\
		c:=\pm\(\frac{2\nu_{2k_0}}{\de_{2k_0}}-\ga\om k_0\),\qquad
		d:=\pm2\(-\frac{\mu_{2k_0}}{\de_{2k_0}}+\mu_0\),\qquad
		\upsilon:=\frac{2(1-\de_{2k_0})}{\de_{2k_0}}
	\end{gathered}
\end{equation}
with the same fixed sign in all $a$, $b$, $c$, $d$ depending on the sign of unscaled $\ep$, this equation is equivalent to the system
\begin{equation}\label{eqTh4_2}
	\begin{aligned}
		\wt{a}x+\wt{b}y & +x(x2+y^2)=0\\
		\wt{c}x+\wt{d}y & +y(x2+y^2)=0
	\end{aligned}
\end{equation}
for $\wt{a}=a/\upsilon$, $\wt{b}=b/\upsilon$, $\wt{c}=c/\upsilon$, $\wt{d}=d/\upsilon$. Note $\upsilon\neq 0$, i.e., $\de_{2k_0}\neq 1$. It is easy to see that $y=0$ whenever $x=0$ and vice versa. Hence we assume $x\neq 0\neq y$. Dividing the first equation by $x$, the second equation by $y$, and comparing the resulting equations, one derives
$$\wt{a}+\wt{b}\frac{y}{x}=\wt{c}\frac{x}{y}+\wt{d}.$$
Therefrom $x=q_{1,2}y$ for
$$q_{1,2}=\frac{(\wt{a}-\wt{d})\pm\sqrt{(\wt{a}-\wt{d})^2+4\wt{b}\wt{c}}}{2\wt{c}}$$
(here $q_1$ corresponds to ``$+$''-sign and $q_2$ to ``$-$''-sign).
As a consequence, putting $x$ in the second equation of \eqref{eqTh4_2} one obtains
$$y_{1,2}=\pm\sqrt{-\frac{\wt{c}q_{1,2}+\wt{d}}{1+q_{1,2}^2}}$$
(again $y_1$ corresponds to ``$+$'', $y_2$ to ``$-$''),
i.e., one has four points $(q_iy_j(q_i),y_j(q_i))$, $i,j=1,2$.

Condition \eqref{con2} means $ad-bc<0$, i.e., $\wt{a}\wt{d}-\wt{b}\wt{c}<0$. So if \eqref{con2} holds, then
$$\sqrt{(\wt{a}-\wt{d})^2+4\wt{b}\wt{c}}=\sqrt{(\wt{a}+\wt{d})^2-4(\wt{a}\wt{d}-\wt{b}\wt{c})}
	>|\wt{a}+\wt{d}|.$$
Accordingly,
\begin{equation}\label{eqTh4_3}
	\wt{c}q_{1,2}+\wt{d}=\frac{1}{2}\(\wt{a}+\wt{d}\pm\sqrt{(\wt{a}-\wt{d})^2+4\wt{b}\wt{c}}\)
\end{equation}
which is negative only for the minus sign (without any respect to the sign of unscaled $\ep$). Thus the nontrivial solutions of \eqref{eqTh4_1} are
\begin{equation}\label{eqTh4_p1}
	(q_2y_{1,2}(q_2),y_{1,2}(q_2)).
\end{equation}

On the other hand, assuming
\begin{equation}\label{con3}
	0<\de_{2k_0}^2(4\mu_0^2+\ga^2\om^2k_0^2)-4(\mu_{2k_0}^2+\nu_{2k_0}^2)<4\mu_0^2,
\end{equation}
equation \eqref{eqTh4_1} still has nontrivial solutions. Indeed, in this case
$$0<\sqrt{(\wt{a}-\wt{d})^2+4\wt{b}\wt{c}}=\sqrt{(\wt{a}+\wt{d})^2-4(\wt{a}\wt{d}-\wt{b}\wt{c})}
	<|\wt{a}+\wt{d}|,$$
so $q_{1,2}$ are both real, again. However, \eqref{eqTh4_3} is negative for both signs on suppose that $\wt{a}+\wt{d}<0$, i.e., $\pm\frac{\mu_0}{\upsilon}<0$ where the sign depends on the sign of unscaled $\ep$, e.g. if originally $\ep>0$ and $\frac{\mu_0}{\upsilon}<0$, equation \eqref{eqTh4_1} has four distinct nontrivial solutions
\begin{equation}\label{eqTh4_p2}
	(q_iy_j(q_i),y_j(q_i)),\qquad i,j=1,2,
\end{equation}
while for $\frac{\mu_0}{\upsilon}>0$, \eqref{eqTh4_1} has no nontrivial solutions.

In conclusion, if condition \eqref{con2} holds, there might exist two solutions of \eqref{eqR1} of order $O(\sqrt{|\ep|})$ for $\ep\neq 0$ small, and if \eqref{con3} is valid, there might exist four solutions of \eqref{eqR1} of order $O(\sqrt{|\ep|})$ for $\ep\sgn\frac{\mu_0\de_{2k_0}}{1-\de_{2k_0}}<0$ and no solutions of order $O(\sqrt{|\ep|})$ for $\ep\sgn\frac{\mu_0\de_{2k_0}}{1-\de_{2k_0}}>0$. We recall by Theorem \ref{Th4}, that there exists a small solution in all cases of lower order than $O(\sqrt{|\ep|})$, namely of order $O(\ep)$. So we might have 3, 5 and 1 small solutions, respectively. To apply the implicit function theorem and so to confirm the existence of these solutions, one has to show that the derivative of $\wt{H}_3$ at points \eqref{eqTh4_p1} or \eqref{eqTh4_p2} is surjective, which is rather awkward for general $a$, $b$, $c$, $d$, and so we do not go into details in this paper. But surely these solutions generically exist. Indeed, the determinant of the Jacobian of the left hand side of \eqref{eqTh4_2} at any $(qy,y)$ is
\begin{equation}\label{det1}
-\wt b\wt c+\wt a\wt d+\(3\wt a+\wt d+q(-2(\wt b+\wt c)+(\wt a+3\wt d)q)\)y^2+3\left(1+q^2\right)^2y^4.
\end{equation}
Now we fix all parameters only $\mu_0$ is variable. Then $q=q_{1,2}$, $\wt b$, $\wt c$ are constant (independent of $\mu_0$) and $y_{1,2}^2$, $\wt a$, $\wt d$ linearly depend on $\mu_0$: $\wt a=\pm\frac{2}{\upsilon}(r_1+\mu_0)$, $\wt d=\pm\frac{2}{\upsilon}(-r_1+\mu_0)$ and $y^2=y_{1,2}^2=\frac{-\wt c\upsilon q\pm2(r_1-\mu_0)}{\upsilon(1+q^2)}$ for constants $r_1$, $r_2$ independent of $\mu_0$. Inserting these expressions into \eqref{det1}, we have
\begin{equation}\label{det2}
\frac{1}{\upsilon^2(1+q^2)}\(a_\pm(r_1,q,\wt b,\wt c,\upsilon)+b_\pm(r_1,q,\wt b,\wt c,\upsilon)\mu_0\)
\end{equation}
for polynomials
$$
\begin{aligned}
a_+(r_1,q,\wt b,\wt c,\upsilon) &= \wt c\left(-\wt b+(\wt b+5\wt c)q^2+3\wt c q^4\right)\upsilon^2-4 q\left(\wt b+\wt c \left(5+2 q^2\right)\right) \upsilon r_1
	+16 r_1^2,\\
b_+(r_1,q,\wt b,\wt c,\upsilon) &= 4 \left(q \left(\wt b+\wt c \left(2+q^2\right)\right) \upsilon-4 r_1\right),\\
a_-(r_1,q,\wt b,\wt c,\upsilon) &= c \left(-\wt b+(\wt b+5\wt c) q^2+3\wt c q^4\right) \upsilon^2
	+4 r_1 \left(q \left(\wt b+c \left(5+2 q^2\right)\right) \upsilon+4 r_1\right),\\
b_-(r_1,q,\wt b,\wt c,\upsilon) &= - 4 \left(q \left(\wt b+\wt c \left(2+q^2\right)\right) \upsilon+4 r_1\right).
\end{aligned}$$
Now it is clear that $a_\pm$, $b_\pm$ are generically nonzero and then \eqref{det2} is nonzero for generic $\mu_0$. This shows that the above solutions exist in generic cases.

2. Let $p=\frac{\pi}{4}$, $\la=\frac{12}{37\sqrt{2}}$. Then by \eqref{eqRem1.1} we have
\begin{equation}\label{eqM2}
	M_2=\left\{\cos\frac{k\pi}{4}\mid k=\{0,1,\dots,7\}\right\}
		=\left\{0,\pm\frac{\sqrt{2}}{2},\pm{1}\right\}
\end{equation}
and condition \eqref{eqRem1.3} is fulfilled.
If $\om^2=\frac{37}{25}$, then $k_0=\pm 1$ satisfies the resonance condition \eqref{eqRem1.4}. Note that
$$\frac{1}{k^2(1-2\la\cos kp)}\leq\frac{1}{k^2(1-2\la)}=\frac{37}{k^2(37-12\sqrt{2})}
	<\frac{37}{13k^2}<\frac{37}{25}$$
whenever $|k|\geq 2$. Thus $k_0=\pm 1$ is the only resonant couple corresponding to $\om^2=\frac{37}{25}$. This is the case of a simple resonance discussed in this section.

On the other side, if $\om^2=\frac{37}{1225}$, then there are at least two couples $k_0=\pm 5, \pm 7$ such that \eqref{eqRem1.4} is fulfilled. Since
$$\frac{37}{k^2(37-12\sqrt{2})}<\frac{37}{20k^2}<\frac{37}{1225}$$
whenever $|k|\geq 8$, it is sufficient to evaluate $\frac{1}{k^2(1-2\la\cos kp)}$ at $k=\{1,\dots,7\}$ to determine the exact number of resonant modes. From Fig. \ref{fig1}, one can see that for $\om^2=\frac{37}{1225}$ there are two resonant couples $k_0=\pm 5,\pm 7$. This is a double resonance which can be investigated analogically to the present section.
\begin{figure}[htbp!]
\begin{center}
\includegraphics[width=0.5\textwidth]{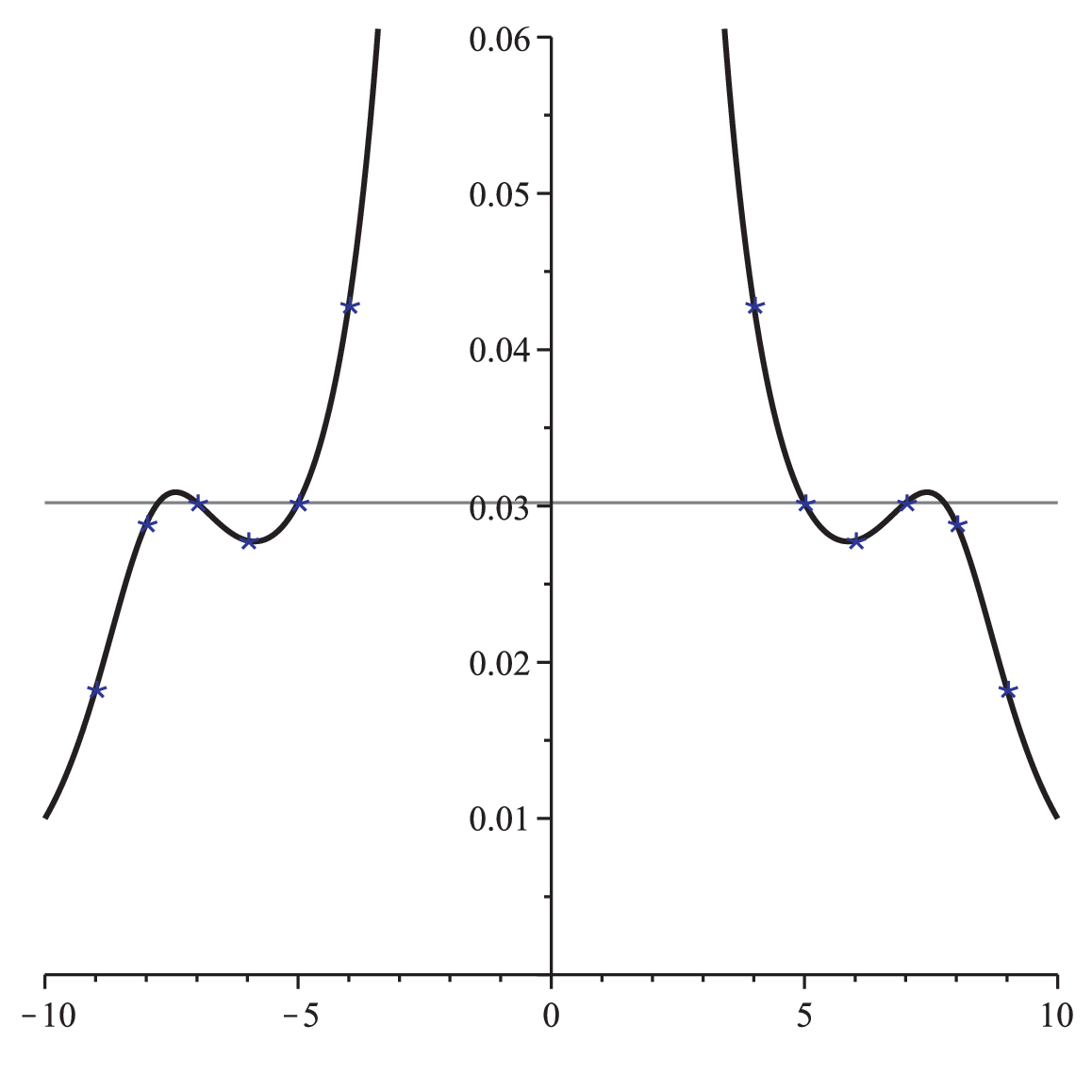}
\end{center}
\caption{The graph of functions $\frac{37}{1225}$ and $\frac{1}{x^2(1-2\la\cos xp)}$ with values at nonzero integers depicted with asterisks.}\label{fig1}
\end{figure}

\end{remark}

\section{Examples}\label{examples}

In this section we present several examples to illustrate the above results.

\begin{example}\label{ex1}
First we study the non-resonant case from Section \ref{per}. We consider the equation
\begin{equation}\label{eqex1}
    \begin{gathered}
		\om^2\(U(z)-\la U\(z-\frac{\pi}{4}\)-\la U\(z+\frac{\pi}{4}\)\)''+\ga\om U'(z)+U(z)\\
		{}+\om^2\(U^2(z)-\la U^2\(z-\frac{\pi}{4}\)-\la U^2\(z+\frac{\pi}{4}\)\)''+\ga\om(U^2(z))'-\ep\cos z=0
    \end{gathered}
\end{equation}
for $z\in\R$ and parameters $\ga,\la,\om>0$, $\frac{1}{2}\neq\la\neq\frac{\sqrt{2}}{2}$,
and apply Theorem \ref{Th1}. Note that equation \eqref{eqex1} is in the form of \eqref{eq2} with $p=\frac{\pi}{4}$, $h(z)=\ep\cos z$, and $M_2$ of \eqref{eqRem1.1} is equivalent to \eqref{eqM2}, i.e., condition \eqref{eqRem1.3} is satisfied. Hence using Remark \ref{Rem1}.3 and Theorem \ref{Th1}, we obtain the existence of a unique solution $U(\ep,z)=O(\ep)$ of equation \eqref{eqex1} for any $\ep$ sufficiently close to $0$. Since $U(\ep,z)$ is $C^\infty$--smooth, we can expand $U(\cdot,z)$ into Taylor series $U(\ep,z)=\ep \wt U_1(z)+O(\ep^2)$, and look for the function $\wt U_1$. It is easy to see that $\wt U_1$ is uniquely determined by the equation
$$\om^2\(\wt U_1(z)-\la\wt U_1\(z-\frac{\pi}{4}\)-\la\wt U_1\(z+\frac{\pi}{4}\)\)''+\ga\om\wt U_1'(z)+\wt U_1(z)-\cos z=0.$$
We look for its solution in the form of $\wt U_1(z)=A\cos z+B\sin z$, and, by comparing coefficients at $\cos z$ and $\sin z$, we obtain
\begin{equation*}
	-\ga\om A+(1+\om^2(\sqrt{2}\la-1))B = 0,\quad
	(1+\om^2(\sqrt{2}\la-1))A+\ga\om B = 1.
\end{equation*}
Therefrom,
$$A=\frac{1+\om^2(\sqrt{2}\la-1)}{\ga^2\om^2+(1+\om^2(\sqrt{2}\la-1))^2},\quad
	B=\frac{\ga\om}{\ga^2\om^2+(1+\om^2(\sqrt{2}\la-1))^2}.$$
In particular, if
\begin{equation}\label{par1}
\la=\frac{12}{37\sqrt{2}},\quad\om=\frac{\sqrt{37}}{5},
\end{equation}
then $A=0$, $B=\frac{5\sqrt{37}}{37\ga}$, i.e.,
\begin{equation}
U(\ep,z)=\frac{5\sqrt{37}}{37\ga}\ep\sin z+O(\ep^2).
\label{exp1}
\end{equation}
\end{example}

\begin{example}\label{ex2}
Now, moving onto resonant cases represented by equation \eqref{eqR1}, we consider the equation
\begin{equation}\label{eqex2}
    \begin{gathered}
		\om^2\(U(z)-\la U\(z-\frac{\pi}{4}\)-\la U\(z+\frac{\pi}{4}\)\)''+\ep\ga\om U'(z)+U(z)\\
		{}+\om^2\(U^2(z)-\la U^2\(z-\frac{\pi}{4}\)-\la U^2\(z+\frac{\pi}{4}\)\)''+\ep\ga\om(U^2(z))'-\ep\cos z=0,
    \end{gathered}
\end{equation}
i.e.\ when $\gamma$ in \eqref{eqex1} is replaced by $\ep\gamma$. Proceeding with point 2 of Remark \ref{Rem2} for parameters \eqref{par1}, then we obtain that $k_0=\pm 1$, $\de_{2k_0}=-\frac{123}{25}$ and $\upsilon=\frac{2(1-\de_{2k_0})}{\de_{2k_0}}=-\frac{296}{123}$.

Since $h(z)=\cos z=\frac{1}{2}\eu^{\imath z}+\frac{1}{2}\eu^{-\imath z}$ in \eqref{eqR2}, then $\mu_{k_0}=\frac{1}{2}$ and $\nu_{k_0}=0$. So Theorem \ref{Th2} can be applied to yield
$$
(x(0),y(0))=-\sqrt[3]{\frac{\de_{2k_0}}{2(1-\de_{2k_0})(\mu_{k_0}^2+\nu_{k_0}^2)}}(\mu_{k_0},\nu_{k_0})
=\left(\sqrt[3]{\frac{123}{74}}\frac{1}{2},0\right)\doteq(0.592281,0).
$$
Using \eqref{eqR5b}, \eqref{eqR5c} and \eqref{eqScala}, we derive that the solution from Theorem \ref{Th2} of \eqref{eqex2} with parameters \eqref{par1} is given by
\begin{equation}
\begin{gathered}
\sqrt[3]{\ep}c(\sqrt[3]{\ep})\eu^{\imath z}+\sqrt[3]{\ep}\bar c(\sqrt[3]{\ep})\eu^{-\imath z}+U_2\(\sqrt[3]{\ep}c(\sqrt[3]{\ep})\eu^{\imath z}+\sqrt[3]{\ep}\bar c(\sqrt[3]{\ep})\eu^{-\imath z},\ep\)\\
=\sqrt[3]{\ep}\(c(0)\eu^{\imath z}+\bar c(0)\eu^{-\imath z}+O\(\sqrt[3]{\ep}\)\)=\sqrt[3]{\ep}\(2x(0)\cos z+O\(\sqrt[3]{\ep}\)\)\\
=\sqrt[3]{\frac{123\ep}{74}}\cos z+O\(\sqrt[3]{\ep^2}\)\doteq \sqrt[3]{\ep}1.18456\cos z+O\(\sqrt[3]{\ep^2}\)
\end{gathered}
\label{exp2}
\end{equation}
for a smooth function $c(\cdot)$ with $c(0)=x(0)+y(0)\imath$ and $\ep\ne0$ small.
\end{example}

\begin{example}\label{ex3}
Similarly, Theorem \ref{Th3} can be applied in the equation
\begin{equation}\label{eqex3}
    \begin{gathered}
		\om^2\(U(z)-\la U\(z-\frac{\pi}{4}\)-\la U\(z+\frac{\pi}{4}\)\)''+\ep\ga\om U'(z)+U(z)\\
		{}+\om^2\(U^2(z)-\la U^2\(z-\frac{\pi}{4}\)-\la U^2\(z+\frac{\pi}{4}\)\)''+\ep\ga\om(U^2(z))'-\ep^2\cos z=0,
    \end{gathered}
\end{equation}
with
\begin{equation}\label{par2}
\la=\frac{12}{37\sqrt{2}},\quad\om=\frac{\sqrt{37}}{5},\quad \ga=0.1.
\end{equation}
Note that $\cos z$-term in \eqref{eqex2} has been replaced by $\ep\cos z$ in our case here, i.e., $h(z)$ in \eqref{eqR2} is replaced by $\ep h(z)$. Then using the proof of Theorem \ref{Th3}, we obtain that $(x(0),y(0))=\(0,-\frac{25}{\sqrt{37}}k_0\)$, since $k_0=\pm1$. Using \eqref{eqR5b}, \eqref{eqR5c} and \eqref{eqScalb}, we derive that the solution from Theorem \ref{Th3} of \eqref{eqex3} with parameters \eqref{par2} is given by
\begin{equation}
\begin{gathered}
	\ep c(\ep)\eu^{\imath k_0z}+\ep\bar{c}(\ep)\eu^{-\imath k_0z}
		+U_2(\ep c(\ep)\eu^{\imath k_0z}+\ep\bar{c}(\ep)\eu^{-\imath k_0z},\ep)\\
	=\ep(c(0)\eu^{\imath k_0z}+\bar{c}(0)\eu^{-\imath k_0z}+O(\ep))
		=\ep(-2y(0)\sin k_0z+O(\ep))\\
	=\frac{50\ep}{\sqrt{37}}k_0\sin k_0z+O(\ep^2)\doteq 8.21995\ep\sin z+O(\ep^2).
\end{gathered}
\label{exp3}
\end{equation}
\end{example}

\begin{example}\label{ex4}
Next, we consider
\begin{equation}\label{eqex4}
    \begin{gathered}
		\om^2\(U(z)-\la U\(z-\frac{\pi}{4}\)-\la U\(z+\frac{\pi}{4}\)\)''+\ep\ga\om U'(z)+U(z)\\
		{}+\om^2\(U^2(z)-\la U^2\(z-\frac{\pi}{4}\)-\la U^2\(z+\frac{\pi}{4}\)\)''+\ep\ga\om(U^2(z))'
         -\ep(\mu_0+2 (\cos 2z-\sin 2z))=0
    \end{gathered}
\end{equation}
with parameters \eqref{par2}, i.e., equation \eqref{eqR1} with the particular driving function $h(z)=\mu_0+2 (\cos 2z-\sin 2z)$. So, $\mu_{k_0}=\nu_{k_0}=0$ and $\mu_{2k_0}=\nu_{2}=-\nu_{-2}=1$. Hence, Theorems \ref{Th2} and \ref{Th3} cannot be used. However, using Theorem \ref{Th4} one can obtain that \eqref{con1} is of the form
$$
-8+\frac{15129}{625} \left(\frac{37}{2500}+4\mu_0^2\right)\ne0,
$$
i.e.\ $|\mu_0|\ne \frac{\sqrt{11940227}}{12300}\doteq 0.280932$. Moreover, \eqref{con2} is equivalent to
\begin{equation}\label{mu0_1}
	|\mu_0|<\frac{\sqrt{11940227}}{12300}.
\end{equation}
First, let $k_0=1$ and consider $\ep>0$, i.e.\ use ``$+$''-sign in \eqref{abcd}. Then we can derive that $q_2\doteq -0.294153$ and $y_{1,2}=\pm\sqrt{0.214884 + 0.764897\mu_0}$.
Note that $0<0.214884 + 0.764897\mu_0\leq 0.429768$. Using formula \eqref{det1}, we can estimate the Jacobian of $\wt{H}_3(q_2y_{1,2}(q_2),y_{1,2}(q_2),0)$ as
\begin{equation}\label{R2_est1}
	0<(0.218047+0.776154\mu_0)v^2\leq 2.52553,
\end{equation}
since the formula \eqref{det1} is equal to $\frac{1}{v^2} \det\du\wt{H}_3(qy,y,0)$. Therefore, we have 3 different small solutions for $k_0=1$, $\ep>0$ small. Using $k_0=-1$, we obtain $q_2\doteq 0.294153$, $y_{1,2}=\pm\sqrt{0.214884 + 0.764897\mu_0}$ and by \eqref{det1}, estimate \eqref{R2_est1} follows.
Hence for $k_0=\pm 1$, $\ep>0$ small and $\mu_0$ satisfying \eqref{mu0_1}, by \eqref{eqTh4_p1} we obtain nontrivial solutions of \eqref{eqTh4_1}. Scaling backwards in \eqref{eqScalc}, we obtain the following small solutions of \eqref{eqex4} with parameters \eqref{par2}
\begin{equation}
\begin{gathered}
	\sqrt{\ep} c(\sqrt{\ep})\eu^{\imath k_0z}
		+\sqrt{\ep}\bar{c}(\sqrt{\ep})\eu^{-\imath k_0z}
		+U_2(\sqrt{\ep} c(\sqrt{\ep})\eu^{\imath k_0z}
		+\sqrt{\ep}\bar{c}(\sqrt{\ep})\eu^{-\imath k_0z},\ep)\\
	=\sqrt{\ep}(2x(0)\cos k_0z-2y(0)\sin k_0z)+O(\ep)
		=2\sqrt{\ep}y(0)(q_2\cos k_0z-\sin k_0z)+O(\ep)\\
	=\pm 2\sqrt{\ep(0.214884 + 0.764897\mu_0)}(0.294153\cos z+\sin z)+O(\ep).
\end{gathered}
\label{exp4}
\end{equation}
Moreover, there is another (smallest) solution of \eqref{eqex4} with parameters \eqref{par2} of order $U(z)=O(\ep)$. Using \eqref{der}, we obtain $x'(0)=y'(0)=0$ and by \eqref{sol1}, the solution is
\begin{equation}\label{eqUsmallest}
	U(\ep)=\ep\(\mu_0-\frac{50}{123}\cos 2z+\frac{50}{123}\sin 2z\)+O(\ep^2).
\end{equation}

Considering $k_0=1$ and $\ep<0$, i.e.\ using ``$-$''-sign in \eqref{abcd}, we derive $q_2\doteq 1.83348$ and $y_{1,2}=\pm\sqrt{0.0535297 - 0.190543\mu_0}$. Note that $0<0.0535297 - 0.190543\mu_0\leq 0.107059$. Formula \eqref{det1} yields
\begin{equation}\label{R2_est2}
	0<0.218047-0.776154\mu_0\leq 0.436093,
\end{equation}
which implies that we also have 3 different small solutions for $k_0=1$, $\ep<0$ small and $\mu_0$ satisfying \eqref{mu0_1}. Note that if $k_0=-1$, $\ep<0$, we obtain $q_2\doteq -1.83348$, $y_{1,2}=\pm\sqrt{0.0535297 - 0.190543\mu_0}$ and, by \eqref{det1}, estimate \eqref{R2_est2}. The 3 solutions of \eqref{eqex4} with parameters \eqref{par2} have the forms \eqref{eqUsmallest} and
$$
\begin{gathered}
	\sqrt{-\ep} c(\sqrt{-\ep})\eu^{\imath k_0z}
		+\sqrt{-\ep}\bar{c}(\sqrt{-\ep})\eu^{-\imath k_0z}
	+U_2\(\sqrt{-\ep} c(\sqrt{-\ep})\eu^{\imath k_0z}
		+\sqrt{-\ep}\bar{c}(\sqrt{-\ep})\eu^{-\imath k_0z},\ep\)\\
	=\sqrt{-\ep}(2x(0)\cos k_0z-2y(0)\sin k_0z)+O(\ep)
	=2\sqrt{-\ep}y(0)(q_2\cos k_0z-\sin k_0z)+O(\ep)\\
	=\pm 2\sqrt{-\ep(0.0535297 - 0.190543\mu_0)}(1.83348\cos z-\sin z)+O(\ep).
\end{gathered}
$$

Next, \eqref{con3} implies
\begin{equation}\label{mu0_2}
	0.280932\doteq\frac{\sqrt{11940227}}{12300}<|\mu_0|
		<\frac{\sqrt{\frac{11940227}{74}}}{1400}\doteq0.286921.
\end{equation}
From the above computations, we can verify that formula \eqref{det1} has one of the forms $0.218047\pm0.776154\mu_0\ne0$. Since $\sgn\upsilon=-1$, we have that: if $\ep$ is small, $\ep\sgn \mu_0>0$ and $\mu_0$ fulfills \eqref{mu0_2}, then there are 5 different small solutions ($4\times O(\sqrt{|\ep|})$ and $1\times O(\ep)$), if $\ep\sgn \mu_0<0$ there is only 1 small solution of order $O(\sqrt{|\ep|})$, actually \eqref{eqUsmallest}. For instance, if $\ep>0$ and $\mu_0$ satisfies \eqref{mu0_2}, we obtain
\renewcommand\arraystretch{1.5}
$$
\begin{array}{c|c|c}
k_0 & q_i & y_i\\\hline
1 & q_1=1.83348 & y_{1,2}(q_1)=\pm\sqrt{-0.0535297 + 0.190543\mu_0}\\
  & q_2=-0.294153 & y_{1,2}(q_2)=\pm\sqrt{0.214884 + 0.764897\mu_0}\\\hline
-1 & q_1=-1.83348 & y_{1,2}(q_1)=\pm\sqrt{-0.0535297 + 0.190543\mu_0}\\
  & q_2=0.294153 & y_{1,2}(q_2)=\pm\sqrt{0.214884 + 0.764897\mu_0}
\end{array}$$
Hence, equation \eqref{eqex4} with parameters \eqref{par2} has the solutions \eqref{eqUsmallest} and
\begin{gather*}
	\pm 2\sqrt{\ep(-0.0535297 + 0.190543\mu_0)}(1.83348\cos z-\sin z)+O(\ep),\\
	\pm 2\sqrt{\ep(0.214884 + 0.764897\mu_0)}(0.294153\cos z+\sin z)+O(\ep)
\end{gather*}
for $\mu_0>0$, and the only solution \eqref{eqUsmallest} of order $O(\sqrt{\ep})$ for $\mu_0<0$.
\end{example}

\section{Numerical results}\label{nums}

To illustrate the theoretical results obtained in the previous sections, we have solved the governing equations \eqref{eq1} and \eqref{eq2} numerically. The advance-delay equation \eqref{eq2} is solved using a pseudo-spectral method, i.e.\ we express the solution $U$ in the Fourier series
\begin{equation}
U(z)=\sum_{j=1}^{J}\left[B_j\cos\left((j-1)\tilde{k}z\right)+C_j\sin\left(j\tilde{k}z\right)\right],
\label{ser}
\end{equation}
where $\tilde{k}=2\pi/L$ and $-L/2<z<L/2$. Substituting the series in \eqref{eq2} and projecting it onto the Fourier modes in a reminiscence of the Galerkin truncation method, we obtain a system of algebraic equations for the Fourier coefficients $B_j$ and $C_j$, which are then solved using, e.g., a Newton-Raphson method. Typically, we use $L=2\pi$ and large $J$ that will be indicated in each calculation below.

The stability of a solution obtained from \eqref{eq2} is determined numerically through calculating its Floquet multipliers, which are eigenvalues of the monodromy matrix. As a first-order system, the linearized equation of \eqref{eq1} is
\begin{align}
\dot{u}_n &= v_n,\\
%	\begin{gathered}
		\Delta[\dot{v}_n(1+2U_n)] &= -\ga v_n(1+2U_n)-u_n(1+2\ga \dot{U}_n)%\nonumber\\&&
		-2\Delta[\ddot{U}_nu_n+2\dot{U}_nv_n],
\end{align}
where $\Delta[\bullet_n]=\bullet_n-\la\bullet_{n-1}-\la\bullet_{n+1}$ and $U_n(t)=U(\omega t+np)$ (cf.\ \eqref{ser}) is a periodic solution of \eqref{eq1}. The linear system is integrated using a Runge-Kutta method of order four with periodic boundary conditions over the period $T=2\pi/\omega$ and with the site index $n=1,2,\dots,N$. The $i$th column of the monodromy matrix $M$ is vector $[u_1(T),\dots,u_N(T),v_1(T),\dots,v_N(T)]^\mathrm{t}$, that corresponds to the initial condition $[u_1(0),\dots,u_N(0),v_1(0),\dots,v_N(0)]^\mathrm{t}$ equal to the $i$th column vector of the identity matrix $I_{2N}$. A periodic solution is asymptotically stable if all Floquet multipliers except the trivial Floquet multiplier are strictly smaller than one in modulus. The (in)stability result obtained from calculating the monodromy matrix is also confirmed by integrating \eqref{eq1}. Note that the physically relevant interval for the coupling parameter is $|\la|<1/2$ \cite{dibl14}. It is because there will be a band of unstable multipliers when $|\la|>1/2$, implying that all solutions of the system will be modulationally unstable. Following Section \ref{examples}, we set $\lambda=12/\left(37\sqrt{2}\right)$ and $p=\pi/4$. Computations for different parameter values have been performed, where we did not see any significant difference.

\begin{figure}[tbhp!]
\begin{center}
\subfigure[]{\includegraphics[width=0.45\textwidth]{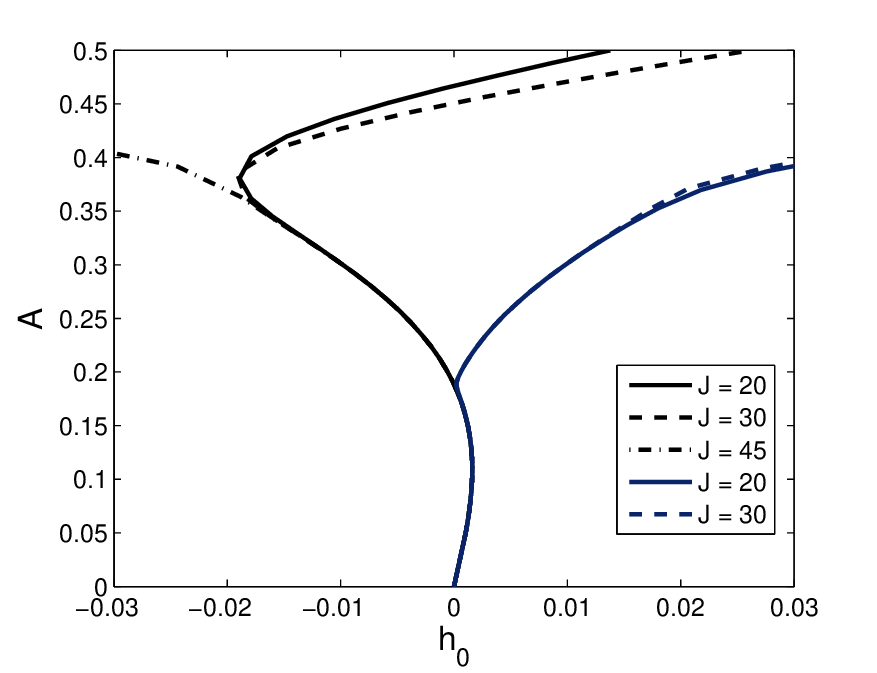}}
\subfigure[]{\includegraphics[width=0.45\textwidth]{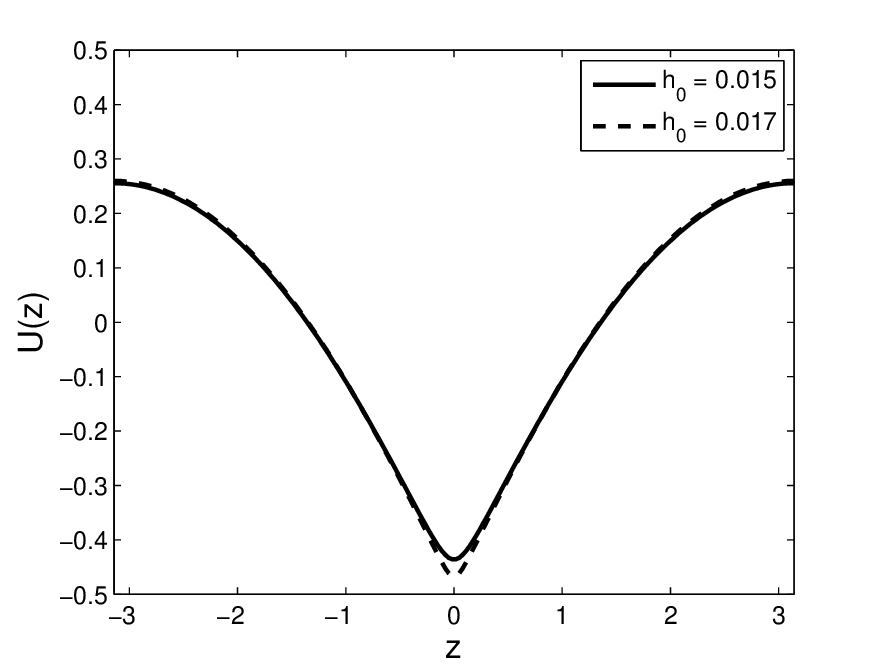}}
\end{center}
\caption{(a) The oscillation amplitude of periodic solutions of \eqref{eq2} as a function of the driving amplitude $h_0$ for $\omega=1.23$ with $\gamma = 0$ (black thick lines) and $\gamma=0.001$ (blue thin lines). (b) Solution profiles at the driving amplitudes indicated in the legend for vanishing $\gamma$.}\label{figw2}
\end{figure}

First, we consider the case of
\begin{equation}
h(z)=h_0\cos z.
\end{equation}
From \eqref{eqRem1.4} for $\gamma=0$ the first resonance of the system is $k=\pm1$, which corresponds to $\omega=\sqrt{37}/5$. In Fig.\ \ref{figw2}, we depict the bifurcation diagrams of periodic solutions for a value of $\omega=1.23>\sqrt{37}/5$. On the vertical axis is the oscillation amplitude $A$ of the periodic solutions defined as \[A=\frac12\left(\text{max}\,(U(z))-\text{min}\,(U(z))\right)\] as a function of the driving amplitude $h_0$.

To check the convergence and dependence of the method on the number of modes, we computed the bifurcation diagrams for several values of $J$. Interestingly we observed that there is always a critical value of $h_0$ beyond which the bifurcation diagrams depend significantly on the parameter. The diagrams for three values of $J$ are shown in Fig.\ \ref{figw2}(a). Using the figure, we conclude that to determine the validity region of our numerical method, we need to compute bifurcation diagrams using at least two different values of $J$. The 'breaking point' of the method is when the curves start branching out. Studying the solution profiles near the breaking point as shown in Fig.\ \ref{figw2}(b), the breaking point seemingly corresponds to an extreme point of $U(z)$ having discontinuous first derivative. We therefore infer that our equation \eqref{eq2} may have peaked-periodic solutions.

As shown as black thick lines in Fig.\ \ref{figw2}(a), when $\gamma=0$ there is one saddle-node bifurcation that occurs as we vary $h_0$. Note that the bifurcation diagram has a mirror symmetric (with respect to the vertical axis $h_0=0$) counterpart, i.e.\ the system is symmetric with the transformation $h_0\to-h_0,\,U\to -U$. When $\gamma$ is turned on to a non-zero value, the bifurcation curve interacts with its symmetry and at the intersection point $h_0=0$ for non-vanishing $A$ an additional turning-point is formed, i.e.\ there is a curve-splitting-and-merging. In that case, as shown as blue thin lines in Fig.\ \ref{figw2}(b), there are now two saddle-node bifurcations.

\begin{figure}[tbhp!]
\begin{center}
\subfigure[]{\includegraphics[width=0.45\textwidth]{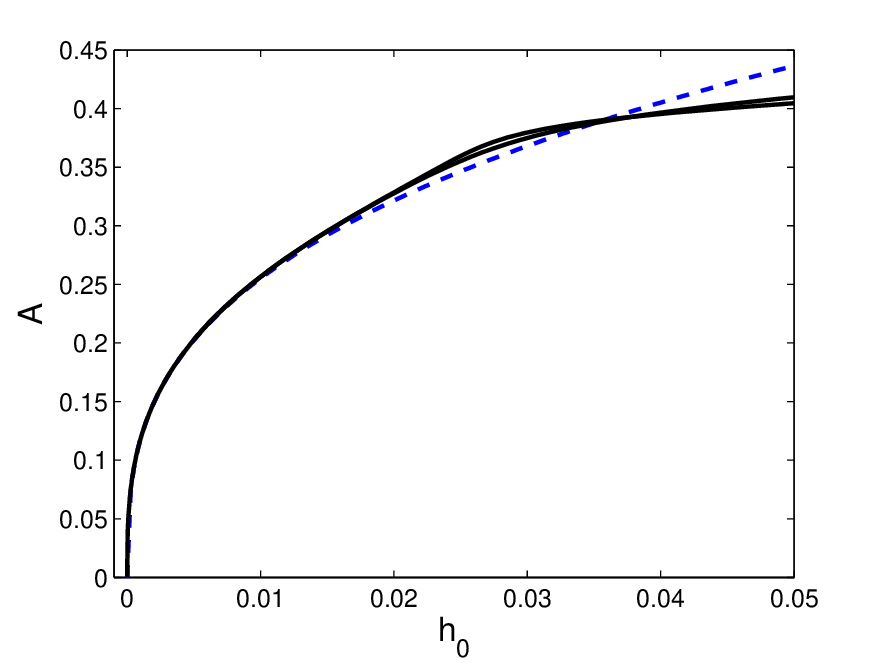}}
\subfigure[]{\includegraphics[width=0.45\textwidth]{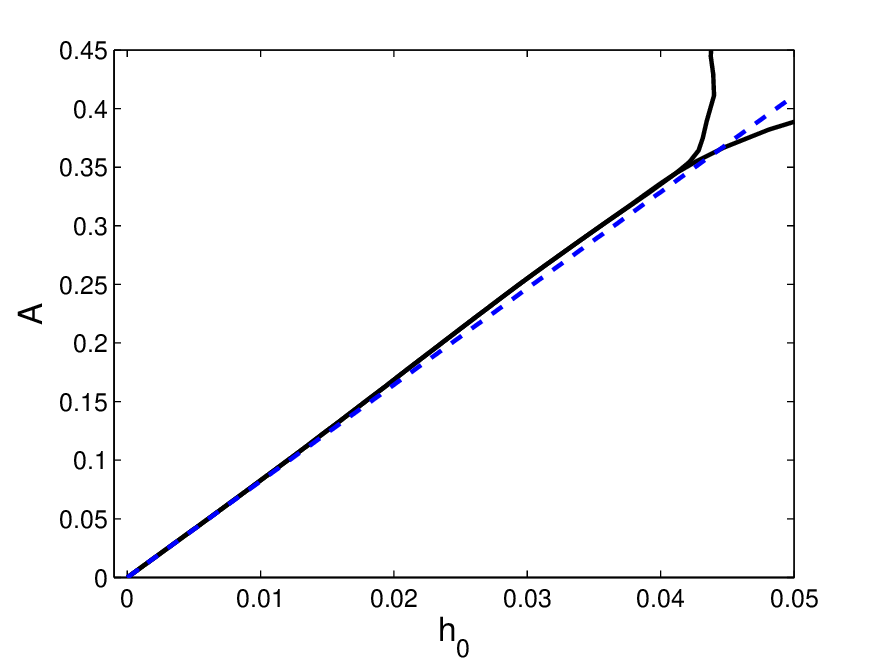}}
\end{center}
\caption{The same as Fig.\ \ref{figw2}(a) for $\omega=\sqrt{37}/5$ with (a) $\gamma=0$ and (b) $\gamma=0.1$. In both panels, the solid black and the dashed blue lines are the numerics and the analytical results \eqref{exp2} (a) and \eqref{exp1}, \eqref{exp3} (b), respectively.}
\label{fig3}
\end{figure}

When $\omega$ is decreased toward the resonance frequency, the critical drive of $h_0$ for the occurrence of the first saddle-node bifurcation, i.e.\ the bifurcation that also exists when $\gamma=0$, decreases towards zero. In the limit $\omega=\sqrt{37}/5$, we present in Fig. \ref{fig3}(a) the bifurcation curve of periodic solutions at the resonance. The numerics (shown as solid black curves) is in good agreement with the analytical result (dashed line) in Example \ref{ex2}, i.e.\ \eqref{exp2}, that when $\gamma=0$ the slope of the existence curve is singular at $h_0=0$.

In panel (b) of the same figure, we depict the bifurcation curve of the periodic solutions for the same parameter values, but with $\ga\neq0$. In agreement with the analytical result in Examples \ref{ex1} and \ref{ex3}, i.e.\ \eqref{exp1} and \eqref{exp3}, adding dissipation will regularize the slope at the origin.

\begin{figure}[tbhp!]
\begin{center}
\subfigure[]{\includegraphics[width=0.45\textwidth]{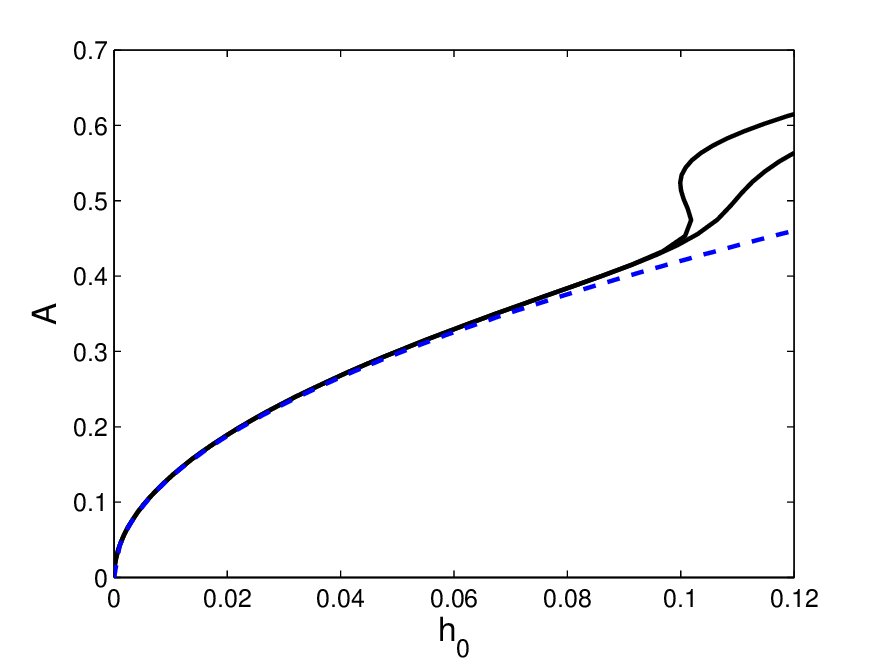}}
\subfigure[]{\includegraphics[width=0.45\textwidth]{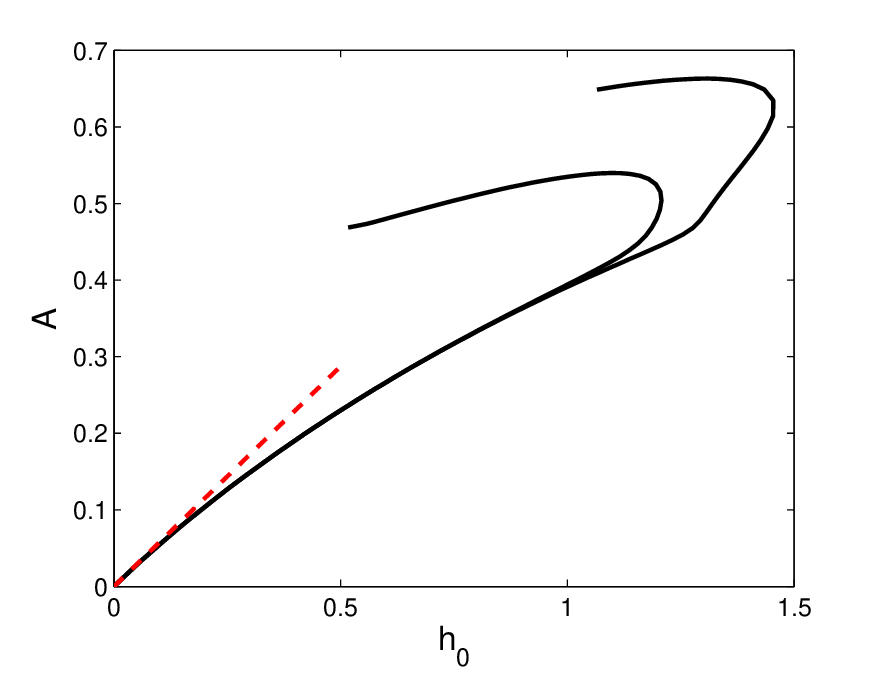}}
\end{center}
\caption{The same as Fig.\ \ref{figw2}(a), but for the periodic drive \eqref{h2}. The dashed lines are the analytical amplitudes from \eqref{exp4} and \eqref{eqUsmallest}. Here, $\gamma=0$.}
\label{fig4}
\end{figure}

\begin{figure}[tbhp!]
\begin{center}
(a)\includegraphics[width=0.52\textwidth]{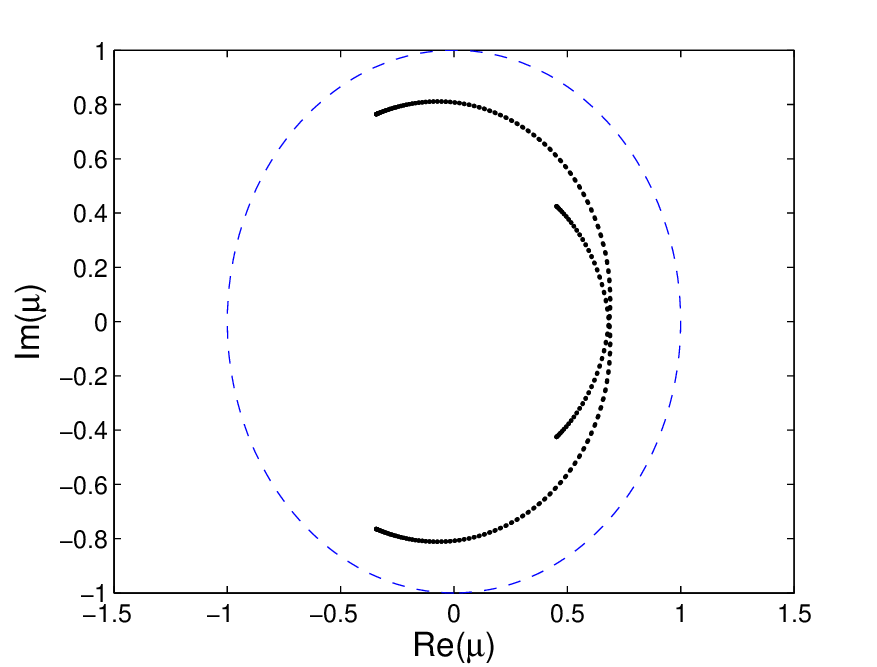}
(b)\includegraphics[width=0.52\textwidth]{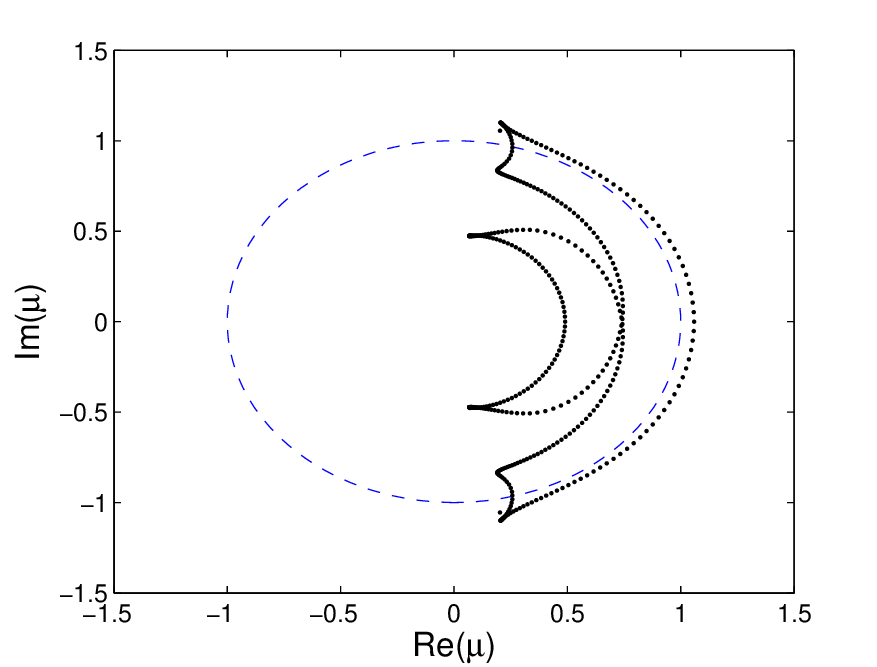}
(c)\includegraphics[width=0.52\textwidth]{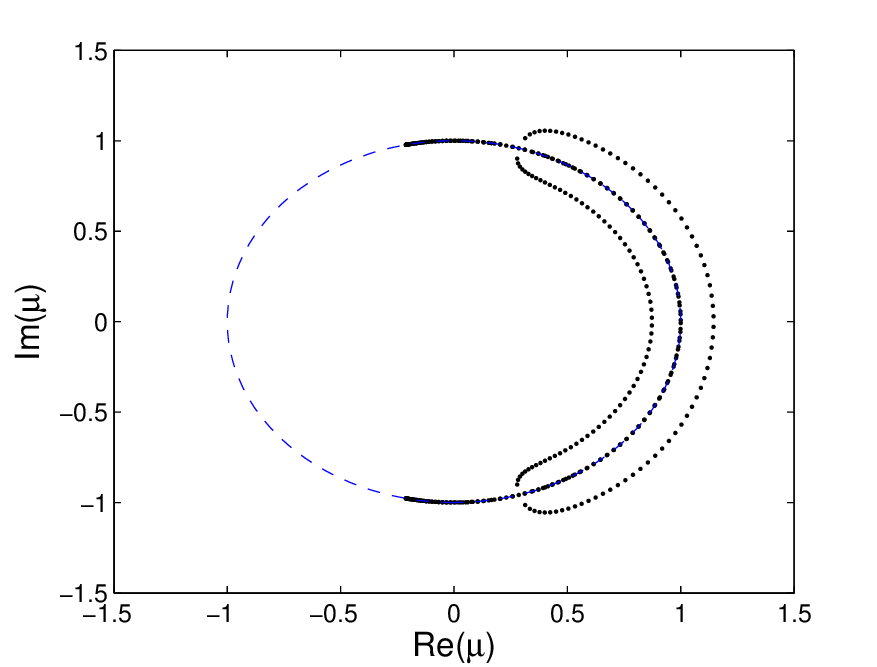}
\end{center}
\caption{Floquet multipliers of periodic solutions for some parameter values (see the text). The dashed curve is the unit circle, showing as guide to the eye.}
\label{fig5}
\end{figure}

We have also considered the multiple resonance case discussed in Section \ref{examples}. For the numerics, we take the drive
\begin{equation}
h(z)=h_0\(\mu_0+2(\cos(2 z)-\sin(2z))\),
\label{h2}
\end{equation}
with $\mu_0=1/4$, which satisfies the condition \eqref{mu0_1}. From Example \ref{ex4}, we know that in this resonance case, there are three small solutions given by \eqref{exp4} and \eqref{eqUsmallest}. Shown in Fig.\ \ref{fig4} are numerically computed bifurcation curves of two of the three periodic solutions.

In panel (a), we show only one bifurcation diagram of the solutions \eqref{exp4}. It is because the amplitudes of the two solutions are almost the same. In panel (b), we depict the smaller solution with amplitudes that scale as $O(h_0)$. One can note that in both plots our analytical results from Lyapunov-Schmidt reduction agree with the numerics.

We have also determined the stability of the solutions obtained numerically and found that in the non-resonant condition, the periodic solutions are generally stable for small enough driving amplitude $h_0$.

Shown in Fig.\ \ref{fig5} are the Floquet multipliers of some periodic solutions, calculated using $N=200$. Panels (a-b) depict the multipliers of the periodic solutions corresponding to the bifurcation curve in Fig.\ \ref{fig3}(b) for $h_0=0.01$ and $0.04$, respectively. As all the multipliers in the first panel are inside the unit circle, we can deduce that the solution is stable. However, as $h_0$ increases further, some multipliers leave the circle as shown in panel (b) and induce modulational instability. Panel (c) of the same figure corresponds to a solution along the existence curve in Fig.\ \ref{fig3}(a), i.e.\ with $h_0=0.01$ and $\ga=0$. It is clear that the solution is unstable. For this resonant case, we found that all solutions for any driving amplitude are modulationally unstable.

We have also simulated the dynamics of the unstable solutions by solving the governing equation \eqref{eq1}. We observed that the typical dynamics of the instability is in the form of solution blow-up, similarly to that in \cite{agao14}.

As a conclusion, we found that the qualitative behaviour of the system \eqref{eq1} is similar to that of forced FPU lattices discussed in \cite{feck13}, despite the significant difference in the coupling terms as explained in Section \ref{intro}.

\section*{Acknowledgments}

%
%MF is partially supported by grants VEGA-MS 1/0507/11, VEGA-SAV 2/7140/27 and APVV-0134-10. MP is partially supported by the Grant VEGA-SAV 2/0029/13. VR and HS acknowledge a partial support from the London Mathematical Society through a visitor grant.

The work of MA has been co-financed from resources of the operational program ”Education and Lifelong Learning” of the European Social Fund and the National Strategic Reference Framework (NSRF)
2007-2013 within the framework of the Action State Scholarships Foundation’s (IKY) mobility grants programme for the short term training in recognized scientific/research centres abroad for candidate doctoral or postdoctoral researchers in Greek universities or research centres. MF is partially supported by Grants VEGA-MS 1/0071/14, VEGA-SAV 2/0029/13 and by the Slovak Research and Development Agency under the contract  No. APVV-14-0378. MP is supported by Grant VEGA-SAV 2/0029/13. The work of VR has been co-financed by the European Union (European Social Fund ESF) and
Greek national funds through the Operational Program Education and Lifelong Learning of the National Strategic Reference Framework (NSRF) Research Funding Program: THALES Investing in knowledge society through the European Social Fund. VR and HS acknowledge partial support from the London Mathematical Society through a visitor grant.

%% The Appendices part is started with the command \appendix;
%% appendix sections are then done as normal sections
%% \appendix

%% \section{}
%% \label{}

%% References
%%
%% Following citation commands can be used in the body text:
%% Usage of \cite is as follows:
%%   \cite{key}         ==>>  [#]
%%   \cite[chap. 2]{key} ==>> [#, chap. 2]
%%

%% References with BibTeX database:

%\section*{References}

\bibliographystyle{elsarticle-num}
%\bibliography{<your-bib-database>}

%% Authors are advised to use a BibTeX database file for their reference list.
%% The provided style file elsarticle-num.bst formats references in the required Procedia style

%% For references without a BibTeX database:

% \begin{thebibliography}{00}

%% \bibitem must have the following form:
%%   \bibitem{key}...
%%

% \bibitem{}

% \end{thebibliography}

\end{document}